\newtheorem{theorem}{{\bf Theorem}}
\newtheorem{proposition}{{\bf Proposition}}
\newtheorem{corollary}{{\bf Corollary}}
\newtheorem{definition}{{\bf Definition}}
\newtheorem{remark}{{\bf Remark}}
\def \rmd{\mathrm{d}}
\def \rme{\mathrm{e}}
\def \ad{\mathrm{ad}}
\def \ddt{\frac{\rmd}{\rmd t}}
\newenvironment{proof}[1][Proof]{\begin{trivlist}
\item[\hskip \labelsep {\bfseries #1}]}{\end{trivlist}}
\newcommand{\qed}{\nobreak \ifvmode \relax \else
      \ifdim\lastskip<1.5em \hskip-\lastskip
      \hskip1.5em plus0em minus0.5em \fi \nobreak
      \vrule height0.75em width0.5em depth0.25em\fi}
\newcommand{\dbracket}[2]{ {[}\mkern-2mu{[} #1 , #2 {]}\mkern-2mu{]} }
\newcommand{\dpoisson}[2]{ {\{}\mkern-5mu{\{} \, #1 , #2 \,{\}}\mkern-5mu{\}} }
\newcommand{\rn}[1]
    {\MakeUppercase{\romannumeral #1}}
 \newcommand{\RN}[1]
    {\mathrm{\MakeUppercase{\romannumeral #1}}}
\begin{document}
\title{Deformation of Lie-Poisson algebras and chirality }

\author{Zensho Yoshida$^1$ and Philip J. Morrison$^2$}
\affiliation{
$^1$ Department of Advanced Energy, University of Tokyo, Kashiwa, Chiba 277-8561, Japan
\\
$^2$ Department of Physics and Institute for Fusion Studies, University of Texas at Austin, TX 78712-1060, USA}

\email{yoshida@ppl.k.u-tokyo.ac.jp, morrison@physics.utexas.edu}

\begin{abstract}
Linearization of a Hamiltonian system around an equilibrium point yields a set of Hamiltonian-symmetric spectra:
If $\lambda$ is an eigenvalue of the linearized generator,
$-\lambda$ and $\overline{\lambda}$ (hence, $-\overline{\lambda}$) are also eigenvalues
--- the former implies a time-reversal symmetry, while the latter guarantees the reality of the solution. 
However, linearization around a \emph{singular equilibrium point} 
(which commonly exists in noncanonical Hamiltonian systems) works out differently,
resulting in breaking of the Hamiltonian symmetry of spectra;
time-reversal asymmetry causes \emph{chirality}.
This interesting phenomenon was first found in analyzing the chiral motion of the rattleback, a boat-shaped top having misaligned axes of inertia and geometry [Phys. Lett. A \textbf{381} (2017),  2772--2777].
To elucidate how chiral spectra are generated, we study the 3-dimensional Lie-Poisson systems,
and classify the prototypes of singularities that cause symmetry breaking.
The central idea is the \emph{deformation} of the underlying Lie algebra;
invoking Bianchi's list of all 3-dimensional Lie algebras,
we show that the so-called class-B algebras, 
which are produced by asymmetric deformations of the simple algebra $\mathfrak{so}(3)$,
yield chiral spectra when linearized around their singularities.
The theory of deformation is generalized to higher dimensions, 
including the infinite-dimensional Poisson manifolds relevant to fluid mechanics.
\end{abstract}


\pacs{02.20.Sv,02.20.Tw,02.40.Yy,45.20.Jj,47.10.Df}

\date{\today}
\maketitle


\section{Introduction}
\label{sec:intro}

Canonical Hamiltonian mechanics serves an archetype for  physics theories with two ingredients: 
 the energy (Hamiltonian) characterizing the mechanical property of matter 
and  symplectic geometry dictating the universal rules of kinematics in the phase space.
When  encountering peculiar dynamics, we might attribute it to a weird Hamiltonian, but usually 
we do not ascribe it to an  adjusted geometry of phase space.   Although this is the natural approach for understanding microscopic (canonical) mechanics,
the other perspective, i.e. deforming the geometry of phase space (keeping the Hamiltonian simple),  can be more effective for studying macroscopic systems in which 
some topological constraints foliate the phase space (such systems are called \emph{noncanonical}).
For example, a holonomic constraint reduces the effective phase space to a leaf embedded in the original canonical phase space, on which some interesting Lie algebra may dictate the kinematics. There are also many examples of noncanonical Hamiltonian systems in fluid and plasma physics,
where the essence of mechanics is attributed to complex Poisson brackets, while the Hamiltonians are rather simple\,\cite{Morrison1998}.

Here we explore the possibility of explaining  peculiar phenomena by the \emph{deformation} of phase space geometry.
In the next section, we start by reviewing a  model of the rattleback,
a boat-shaped top having misaligned axes of inertia and geometry\,\cite{Moffatt-Tokieda2008},
which is a 3-dimensional noncanonical Hamiltonian system
endowed with an interesting Poisson bracket\,\cite{YTM-rattleback}.
We find that the foliation of the 3-dimensional phase space by the Casimir invariant of this system has a \emph{singularity},
which turns out to be the cause of the symmetry breaking; viz.\ 
the linearized equations  obtained by expanding  about the singularity  have a pair of unbalanced positive and negative eigenvalues,
which explains the chirality (time-reversal asymmetry) of the rattleback.  
Here we also review  the basic formalism of Lie-Poisson manifolds.
Particular attention will be drawn to the duality of the space of state vectors (tangent bundle) and the phase space of observables (cotangent bundle),
which plays an essential role when we discuss the deformation of Lie algebras\,\cite{deformation}
and its reflection on Lie-Poisson brackets.
Using Bianchi's list of 3-dimensional Lie algebras (for example, see\,\cite{Ellis,Ryan}), 
we  examine all  3-dimensional Lie-Poisson manifolds;  (Sec.\,\ref{section:3D}).
Then, we  find an interesting correspondence between the classification of Lie algebras and symmetry breaking; 
class-A Lie-Poisson systems maintain the spectral symmetry, while class-B systems do not.
In Sec.\,\ref{sec:deformation},  we nail down the underlying structure that causes the symmetry breaking (chirality).
We show that all 3-dimensional Lie algebras are derived by deformation of a simple algebra $\mathfrak{so}(3)$,
and the asymmetry of the deformation endomorphism  brings about the symmetry breaking.
The 3-dimensional Lie algebras are special in that all possible Lie algebras are derived from a mother class-A  simple algebra by deformations; a symmetric endomorphism produces a class-A algebra, while an asymmetric deformation yields a class-B algebra.
In higher dimensions, we find  it necessary to  add another group of Lie algebras (called class C) that are not produced from some mother class-A algebra; we  find that their corresponding Lie-Poisson systems have chirality (Sec.\,\ref{sec:dimension>3}).

Because of  the richness of noncanonical Hamiltonian systems of fluids and plasmas,
we will construct, in Sec.\,\ref{sec:so(3)_bundle}, 
a bridge between the foregoing 3-dimensional Lie algebras and continuum (infinite-dimensional) dynamical systems.
Upon introducing a base space, we  define a vector bundle of Lie-algebra fibers to formulate field theories;   these theories 
are not yet relevant to fluid/plasma mechanics.  However, we  do  show that the deformation of the $\mathfrak{so}(3)$ bundle by the curl operator (which can be regarded as a symmetric deformation) yields the Lie-Poisson bracket of vortex dynamics.

\section{Preliminaries}
\label{sec:preliminaries}

\subsection{An example of chiral dynamics:  the rattleback}
\label{sec:PRS}

The peculiar motion of a rattleback
is  ``strange''  when examined in the light of Hamiltonian mechanics.
Let us start with a short review of our previous discussion\,\cite{YTM-rattleback}.

In the limit of zero dissipation, the equations of Moffatt and Tokieda \,\cite{Moffatt-Tokieda2008} are 
\begin{equation}
\frac{\rmd}{\rmd t} \left( \begin{array}{c}
P \\ R \\ S
\end{array} \right)
=
\left( \begin{array}{c}
\alpha P S \\ - RS \\ R^2 - \alpha P^2
\end{array} \right) ,
\label{MT-1}
\end{equation}
where $P, R, S$ stand for   {\it pitching}, {\it rolling}, and {\it spinning\/} modes of  motion.  
We will call (\ref{MT-1}) the PRS system and denote   the state vector by $\bm{x} = (P~R~S)^{\rm T} \in \mathbb{R}^3$.
The parameter $\alpha$ encodes the aspect ratio of the rattleback shape.
Here we assume $\alpha>1$, so that $P$ corresponds to lengthwise oscillations along the keel 
of the boat and  $R$ to sideways oscillations. 

Evidently, a purely spinning state $\bm{x}_s = (0~0~S_e)^{\rm T}$ 
($S_e$  an arbitrary constant measure of spin) is an equilibrium (steady state) of (\ref{MT-1}).
For small perturbation $\tilde{\bm{x}}=(\tilde{P}~\tilde{R}~\tilde{S})^{\rm T} $ around $\bm{x}_s$,  linearization of  (\ref{MT-1}) gives
\begin{equation}
\frac{\rmd}{\rmd t} 
\left( \begin{array}{c}
\tilde{P} \\ \tilde{R} \\ \tilde{S}
\end{array} \right)
=
\left(
\begin{array}{ccc}
\alpha S_e & 0 & 0
\\
0 & - S_e & 0
\\
0 & 0 & 0
\end{array} \right)
\left( \begin{array}{c}
\tilde{P} \\ \tilde{R} \\ \tilde{S}
\end{array} \right) .
\label{SeqEom}
\end{equation}
Hence, we obtain an  unbalanced spectrum with the eigenvalues  (time constants) 
$\alpha S_e$ and $-S_e$.
For $S_e > 0$, $\tilde{P}$ grows exponentially at the larger rate $\alpha S_e$,  while 
$\tilde{R}$ decays exponentially at the  smaller rate $S_e$.  
For $S_e < 0$, these are reversed, i.e.,
$\tilde{R}$ grows at the smaller rate $|S_e|$,  while $\tilde{P}$ decays at the larger rate $\alpha |S_e|$.
The chirality of rattleback motion manifests as
these unbalanced eigenvalue\,\cite{Moffatt-Tokieda2008}.

This observation raises a paradox, if we notice that the PRS system is a Hamiltonian system.
As is well-known, the spectra of a linearized Hamiltonian system must have   \emph{Hamiltonian symmetry},
i.e., when $\lambda$ is an eigenvalue, $-\lambda$, as well as the complex conjugate $\overline{\lambda}$ (hence, $-\overline{\lambda}$ also) 
are simultaneous eigenvalues.
The pair of $\lambda$ and $-\lambda$ guarantees a  time-reversal symmetry,
and the pair of $\lambda$ and $\overline{\lambda}$ guarantees the reality of the state vectors.
The unbalanced spectrum of (\ref{SeqEom}) does not have this symmetry.
However, we do find that (\ref{MT-1}) can be put into a Hamiltonian form\,\cite{YTM-rattleback}:
\begin{equation}
\frac{\rmd}{\rmd t} \bm{x} = J \partial_{\bm{x}} H,
\label{Hamiltonian_eq_PRS}
\end{equation}
where 
\begin{equation}
H(\bm{x})
=  \frac{1}{2}\left( P^2 + R^2 + S^2 \right)
\label{PRS_Hamiltonian}
\end{equation}
is the Hamiltonian, and 
\begin{equation}
J  = \left( \begin{array}{ccc}
0  &  0  &  \alpha P
\\
0  &  0  &  -R
\\
-\alpha P & R  & 0
\end{array} \right)
\label{MT-J}
\end{equation}
is the Poisson matrix (also called co-symplectic matrix and  Hamiltonian bi-vector).
By direct calculation, we can verify that Jacobi's identity holds for the bracket  
\begin{equation}
\{G,  H \} = ( \partial_{\bm{x}}G, J \partial_{\bm{x}} H ) ,
\label{MT-Poisson}
\end{equation}
where $(\bm{a},\bm{b})$ denotes the standard  inner-product (pairing).

\subsection{Lie-Poisson brackets}
\label{subsec:Lie-Poisson}

While we first derived the Poisson bracket (\ref{MT-Poisson}) through an heuristic argument,
there is a systematic method for constructing Poisson brackets from any given Lie algebra.
Such brackets are called \emph{Lie-Poisson brackets}, because they were known to Lie in the 19th century.
The PRS bracket (\ref{MT-Poisson}) was identified   as the type-VI Lie-Poisson bracket
in accordance with Bianchi's classification of 3-dimensional Lie algebras\,\cite{YTM-rattleback}.

\subsubsection{Phase space and measurement:}

Here we review the formulation of Lie-Poisson brackets,  paying attention to the relationship between the space of state vectors 
and its dual, i.e.\  the set of phase space of observables.
Let $X$ be a real vector space, which we call the  \emph{state space}.
When the dimension of $X$ is infinite, we assume that $X$ is a Banach space endowed with a norm $\| \cdot \|$.
A member $\bm{x}$ of $X$ is called a \emph{state vector}.
The dual space of $X$ (the vector space of linear functionals on $X$) is denoted by $X^*$.
With a bilinear pairing $\langle ~,~ \rangle:\, X\times X^* \rightarrow \mathbb{R}$, 
we can represent a linear functional as $\Xi(\bm{x}) = \langle \bm{x}, \bm{\xi} \rangle $ ($\bm{\xi}\in X^*$).
For a Hilbert space, we have the Riesz representation theorem,
so that we can identify $X^*=X$ by using the inner product $(~,~)$  in place of $\langle~,~\rangle$.
Physically, $\bm{\xi}$ means an \emph{observable}
 ($\Xi(\bm{x})$ is the measurement of a physical quantity for a state $\bm{x}$).
 We call $X^*$ the \emph{phase space}.

By introducing a basis for each space,
let us examine the mutual relationship between $X$ and $X^*$ more explicitly.
When $X$ has a finite dimension $n$, we can define a basis $\{\bm{e}_1,\cdots,\bm{e}_n \}$
to represent $\bm{x} =  x^k \bm{e}_k$
(we invoke Einstein's summation rule of contraction).
On the other hand,  we provide $X^*$ with the dual basis $\{\bm{e}^1,\cdots,\bm{e}^n \}$ such that
$\langle \bm{e}_j, \bm{e}^k \rangle = \delta_{jk}$.
A complete system of measurements is given by $\Xi^k (\bm{x})= \langle \bm{x} , \bm{e}^k \rangle $
($k=1,\cdots,n$);
measuring every $x^k = \langle \bm{x} , \bm{e}^k \rangle$ for a state vector $\bm{x}$, 
we can identify it as $\bm{x}= x^k \bm{e}_k$.
Therefore, we may say that $X^*$ defines $X$ as $(X^*)^*$
(this \emph{reflexive relation} is not trivial in infinite dimensions).
It is more legitimate to construct a theory by first defining $X^*$, 
since the description of a system depends on what we can measure.
For example, if we remove $\bm{e}^n$ from $X^*$, the component $x^n$ becomes invisible,
resulting in a reduced identification of $\bm{x}$ by only $x^j=\langle \bm{x}, \bm{e}^j\rangle$ ($j=1,\cdots, n-1$).


\subsubsection{The Lie algebra $X$:}

Endowing $X$ with a Lie bracket $[~ , ~ ]: X \times X \rightarrow X$, makes  $X$  a Lie-algebra.
By $\ad_{\bm{v}} \circ = [ \circ, \bm{v}] : \,X \rightarrow X$, we denote the adjoint representation of $\bm{v}\in X$.
Physically, the action of $\ad_{\bm{v}}$ on a state vector $\bm{x}\in X$ 
is a representation of infinitesimal dynamics:
\[
\ddt \bm{x}= \ad_{\bm{v}} \bm{x} = [\bm{x}, \bm{v}] .
\]

Dual to  $\ad_{\bm{v}}$,
we define the coadjoint action $\ad^*_{\bm{v}} \circ = [\bm{v},\circ]^*:\, X^* \rightarrow X^*$,
where $[~,~]^*: X \times X^* \rightarrow X^*$ is defined by 
\begin{equation}
\langle \bm{x}, [\bm{v}, \bm{\xi}]^* \rangle := \langle [\bm{x},\bm{v}], \bm{\xi} \rangle.
\label{coadjoint}
\end{equation}
The right-hand side means that we observe the dynamics of a state vector $\bm{x}$
by measuring an observable $\bm{\xi}$.
The left-hand side is its translation into the change in the observable $\bm{\xi}$ (evaluated for a fixed state vector $\bm{x}$):
\[
\ddt {\bm{\xi}}=\ad^*_{\bm{v}} \bm{\xi} = [\bm{v},\bm{\xi}]^*.
\]

\begin{remark}[semi-simple Lie algebra]
\label{remark:semi-simple}
\normalfont
For a semi-simple Lie algebra, we can formally evaluate $[\bm{x},\bm{y}]^* = [\bm{x},\bm{y}]$.
This means that we may identify $X=X^*$ (with an appropriate basis of $X^*$ as explained below),
and that the structure constants are \emph{fully antisymmetric}.
Let us write 
\begin{equation}
[\bm{e}_j,\bm{e}_k] = c_{jk}^\alpha \bm{e}_\alpha
\label{structure_constant}
\end{equation}
to define the structure constants $c_{jk}^\alpha$
($\{ \bm{e}_j \}$ is the basis of $X$).  
For a semi-simple Lie algebra, the Killing form 
$g_{jk} = c_{ja}^b c_{kb}^a$ is regular (nondegenerate).
We find that $c_{ijk} := c^\alpha_{jk} g_{\alpha i}$ is fully anti-symmetric
(i.e. $c_{jik}=c_{ikj}=-c_{ijk}$), by which the brackets $[~,~]$ and $[~,~]^*$ are equally evaluated as to be shown below in (\ref{fully-antisymmetric}).
Indeed, we observe, using Jacobi's identity,
\[
c_{ijk} = c_{jk}^\alpha c_{b\alpha}^a c_{ai}^b
= - (c_{kb}^\alpha c_{j\alpha}^a + c_{bj}^\alpha c_{k\alpha}^a) c_{ai}^b.
\]
Changing the indexes in the second term as $\alpha\mapsto b$, $a\mapsto \alpha$ and $b\mapsto a$, 
we may rewrite the right-had side as
\[
-c_{kb}^\alpha (c_{ai}^b c_{j\alpha}^a + c_{aj}^b c_{\alpha i}^a)
=
c_{kb}^\alpha c_{a\alpha}^b c_{ij}^a 
= g_{ak} c_{ij}^a = c_{kij}.
\]
We represent $\bm{\xi}\in X^*$ in contravariant variables as 
$\bm{\xi}=\xi^i \bm{\epsilon}_i :=\xi^i  g_{i \beta}\bm{e}^\beta$
($\bm{e}^\beta$ being the dual of $\bm{e}_\beta$).
Notice that this transformation is possible only when $g_{i\beta}$ is nondegenerate, i.e.\ $X$ is semi-simple.
We may write $\langle \bm{x}, \bm{\xi} \rangle = x^j \xi^i g_{ji}$.
For example, when $c_{jk}^i = \varepsilon_{ijk}$, the (scaled) Killing form  is $g_{ji}=-\delta_{ji}$.
Hence, $\xi^i = - \xi_i$, and $\langle \bm{x}, \bm{\xi} \rangle = -x^j \xi^i \delta_{ji} = x^j \xi_j $.
With contravariant variables $\bm{x}=x^j \bm{e}_j$, $\bm{y}=y^k \bm{e}_k$ 
and $\bm{\xi}= \xi^i \bm{\epsilon}_i =\xi^i  g_{i \beta}\bm{e}^\beta$, we may calculate
\[
\langle [\bm{x},\bm{y}], \bm{\xi}\rangle =
\langle c_{jk}^\alpha x^j y^k  \bm{e}_\alpha, \xi^i g_{i \beta } \bm{e}^\beta\rangle
=  c_{jk}^\alpha x^j y^k \xi^i g_{i \beta } \delta_{\alpha\beta} = c_{ijk} x^j y^k \xi^i.
\]
To write this as $\langle \bm{x}, [\bm{y}, \bm{\xi}]^* \rangle$,
we identify $[~,~]^*: X\times X^*\rightarrow X^*$ as (in the contravariant parameterization of $X^*$
with the basis $\{ \bm{\epsilon}_\alpha =  g_{\alpha j}  \bm{e}^j \}$)
\begin{equation}
[\bm{y}, \bm{\xi}]^* = c_{jki} y^k \xi^i \bm{e}^j = 
c_{ki}^\alpha y^k \xi^i g_{\alpha j}  \bm{e}^j
= c_{ki}^\alpha y^k \xi^i \bm{\epsilon}_\alpha 
= [\bm{y}, \bm{\xi}].
\label{fully-antisymmetric}
\end{equation}
We note that the general dual bracket $[~,~]^*$ is not necessarily a Lie bracket 
(even $[\bm{\xi},\bm{\xi}]^* =0$ may not hold).
\end{remark}

\subsubsection{The Poisson manifold $X^*$ and Lie-Poisson algebra:}

Now we construct a Poisson algebra on $C^\infty(X^*)$,
the space of smooth $\mathbb{R}$-valued functions on $X^*$;
hereafter we call $X^*$ a Poisson manifold 
(a point in $X^*$ is denoted by $\bm{\xi}$ ),
and $G(\bm{\xi}) \in C^\infty(X^*)$ a \emph{physical quantity}.
The most important example of a physical quantity is the energy=Hamiltonian.

For $G(\bm{\xi}) \in C^\infty(X^*)$,
we define its \emph{gradient} $\partial_{\bm{\xi}} G ~(\in X)$ by
\[
G(\bm{\xi}+\epsilon\tilde{\bm{\xi}}) -  G(\bm{\xi})
= \epsilon \langle \partial_{\bm{\xi}} G, \tilde{\bm{\xi}} \rangle
+ O(\epsilon^2),
\quad \forall \tilde{\bm{\xi}} \in X^*.
\]
If $\bm{v}\in X$ is given as $\bm{v}=\partial_{\bm{\xi}} H$ with a \emph{Hamiltonian} $H(\bm{\xi}) \in C^\infty(X^*)$, 
its coadjoint action reads Hamilton's equation:
\begin{equation}
\ddt {\bm{\xi}}  = \ad_{\bm{v}}^* \bm{\xi} = [\partial_{\bm{\xi}} H, \bm{\xi} ]^* .
\label{Hamilton-1}
\end{equation}
For a general physical quantity $G(\bm{\xi})\in C^\infty(X^*)$, we may calculate
\begin{equation}
\ddt {G}(\bm{\xi}(t)) = \langle \partial_{\bm{\xi}} G, \ddt {\bm{\xi}} \rangle = \langle \partial_{\bm{\xi}} G, [\partial_{\bm{\xi}} H, \bm{\xi} ]^*\rangle.
\label{Hamilton-2}
\end{equation}
We write the right-hand side as $\{ G, H \}$, and call it the \emph{Lie-Poisson bracket},
i.e.
on the space $ C^\infty(X^*)$ of physical quantities, we define a Poisson algebra by 
\begin{equation}
\{ G , H \}
= \langle \partial_{\bm{\xi}} G, [\partial_{\bm{\xi}} H, \bm{\xi} ]^*\rangle
= \langle  [\partial_{\bm{\xi}} G , \partial_{\bm{\xi}} H ] , \bm{\xi} \rangle.
\label{Lie-Poisson-1}
\end{equation}
The bi-linearity, antisymmetry, and the Leibniz property are evident.
Jacobi's identity inherits that of the Lie bracket $[~,~]$ (see \cite{Morrison1998}).

Denoting
\begin{equation}
J(\bm{\xi}) \circ  = [\, \circ \, , \bm{\xi}]^* \quad : \, X \rightarrow X^*,
\label{Poisson_operator}
\end{equation}
which we call a \emph{Poisson matrix} (or \emph{Poisson operator}, particularly  if $X$ is an infinite-dimensional space),
we may write (\ref{Lie-Poisson-1}) as
\begin{equation}
\{ G,H \} = \langle \partial_{\bm{\xi}} G , J(\bm{\xi}) \partial_{\bm{\xi}} H \rangle .
\label{Lie-Poisson-3}
\end{equation}
Invoking the structure constants $c_{jk}^\ell$ of the Lie bracket $[~,~]$
(see (\ref{structure_constant})), we may write
the Poisson matrix as
\begin{equation}
J(\bm{\xi})_{jk}  = c_{jk}^\ell \xi_\ell ,
\label{Poisson_operator-2}
\end{equation}
whence 
\begin{equation}
\{ G,H \} =  (\partial_{\xi_j}G ) J(\bm{\xi})_{jk} (\partial_{\xi_k}H ) .
\label{Lie-Poisson-4}
\end{equation}

\begin{remark}[related Poisson brackets]
\label{remark:generalized_Lie-Poisson}
\normalfont
One may define a \emph{homogeneous} Poisson bracket such that
\begin{equation}
\{ G,H \} = \langle [\partial_{\bm{\xi}} G , \partial_{\bm{\xi}} H ] , \bm{\phi} \rangle
\label{Lie-Poisson-G}
\end{equation}
with an arbitrary constant vector $\bm{\phi} \in X^*$.
Here,  the Poisson matrix $J(\bm{\phi}) \in \mathrm{Hom}(X,X^*)$ is a homogeneous (constant coefficient) map.
The simplest choice is $\bm{\phi}=\bm{e}^\ell$ ($\ell$ is a fixed index), which gives
\[
\langle [\partial_{\bm{\xi}} G , \partial_{\bm{\xi}} H], \bm{e}^\ell \rangle
= \frac{\partial G}{\partial \xi_j} \frac{\partial H}{\partial \xi_k} \langle [\bm{e}_j,\bm{e}_k], \bm{e}^\ell \rangle
= \frac{\partial G}{\partial \xi_j} \frac{\partial H}{\partial \xi_k} c^\ell_{jk} .
\]
We easily find that the bracket of (\ref{Lie-Poisson-G}) satisfies Jacobi's identity.
We will encounter such brackets when we linearize Lie-Poisson brackets (see Sec.\,\ref{subsec:linearization_around_sing-3D}).
Another interesting idea is the \emph{deformation} of the Lie-Poisson bracket such that
\begin{equation}
\{ G,H \} = \langle [\partial_{\bm{\xi}} G , \partial_{\bm{\xi}} H ] , M \bm{\xi} \rangle ,
\label{Lie-Poisson-G2}
\end{equation}
where $M$ is a certain linear map $X^*\rightarrow X^*$. 
Rewriting (\ref{Lie-Poisson-G2}) as
\begin{equation}
\{ G,H \} = \langle M^{\mathrm{T}} [\partial_{\bm{\xi}} G , \partial_{\bm{\xi}} H ] , \bm{\xi} \rangle ,
\label{Lie-Poisson-G3}
\end{equation}
we may view this as a deformation of the Lie algebra by modifying the bracket form 
$[~,~]$ to $[~,~]_M=M^{\mathrm{T}}[~,~]$.
Of course, there is a strong restriction on $M$ so that the new bracket $[~,~]_M$ satisfies Jacobi's identity.
This is, indeed, the central issue of the following discussions (see Sec.\ref{subsec:deformation}).
\end{remark}

\subsection{The Casimir}

Given a Hamiltonian $H \in C^\infty(X^*)$,
the dynamics of $\bm{\xi}$ (in the Poisson manifold $X^*$) is described by Hamilton's equation:
repeating (\ref{Hamilton-1}) with notation (\ref{Poisson_operator}), 
\begin{equation}
\ddt {\bm{\xi}}  
= J(\bm{\xi}) \partial_{\bm{\xi}} H(\bm{\xi}) ,
\label{Hamilton-1'}
\end{equation}
Let us look at the equilibrium points.
If $J(\bm{\xi})$ is regular (i.e. $\mathrm{Ker}\,J(\bm{\xi})=\{0\}$ for every $\bm{\xi}\in X^*$), 
an  equilibrium point of the Hamiltonian system  (\ref{Hamilton-1'}) must be a critical point of the Hamiltonian.
However, nontrivial $\mathrm{Ker}\,J(\bm{\xi})$ enriches the set of equilibrium points.
If $C(\bm{\xi}) \in C^\infty(X^*)$ satisfies
\begin{equation}
\{ C, G \}=0,
\quad \forall G \in C^\infty (X^*) ,
\label{Casimir_definition}
\end{equation}
we call $C$ a Casimir (or a center element of the Lie-Poisson algebra).
By the definition of the Lie-Poisson bracket, (\ref{Casimir_definition}) is equivalent to
\begin{equation}
J(\bm{\xi}) \partial_{\bm{\xi}} C = 0,
\label{Casimir_definition-2}
\end{equation}
which implies that $C$ is the ``integral'' of an element of $ \mathrm{Ker}\,J(\bm{\xi})$, i.e.
\[
\partial_{\bm{\xi}} C \in  \mathrm{Ker}\,J(\bm{\xi}).
\]
The dynamics governed by (\ref{Hamilton-1'}) is invariant under the transformation
\[
H \mapsto F = H + \mu C \quad (\forall \mu\in\mathbb{R}).
\]
If there are multiple Casimirs, we may include them to define further transformed Hamiltonians
$F=H +\mu_1 C_1 + \mu_2 C_2 + \cdots$.
We call $F$ an \emph{energy-Casimir function}. 
While the critical points of $H$ are often trivial, those of $F$ may have various interesting structure.
As mentioned in the  Introduction, 
such degeneracy  enables even simple Hamiltonians to generate nontrivial structure or dynamics in a 
system dictated by a particular Poisson algebra; the key factor being  the nature of the degeneracy of $J(\bm{\xi})$.

A point where $\mathrm{Rank}\, J(\bm{\xi})$ changes is a \emph{singularity} of the Poisson algebra
(see Remark\,\ref{remark:symplectic}).
By the Lie-Darboux theorem\,\cite{Morrison1998},
the Casimirs foliate the phase space $X^*$, so that, in a neighborhood of every regular point
(where  $\mathrm{Rank}\, J(\bm{\xi})$ is constant), the leaf is locally symplectic,
i.e., there is a local coordinate system in which $J(\bm{\xi})$ is transformed into a standard form
\begin{equation}
J_{D}  = J_c \oplus^\nu 0,
\quad
J_c = \left( \begin{array}{cc}
0  &  ~ I~ 
\\
-I &  0  
\end{array} \right) ,
\label{normal-J}
\end{equation}
where $\nu$ is the nullity of $J(\bm{\xi})$.

In the following discussion, the \emph{singularity} $\sigma = \{ \bm{\xi}_s \in X^*;\, J(\bm{\xi}_s)=0 \}$ will play an important role (see Remark\,\ref{remark:symplectic}).
Needless to say, every point $\bm{\xi}_s\in\sigma$ is an equilibrium point, which we call a \emph{singular equilibrium}, 
and distinguish it from the critical points of the energy-Casimir functional;
the latter will be called \emph{regular equilibria}.

\begin{remark}[symplectic foliation and singularity]
\label{remark:symplectic}
\normalfont
Here we study chirality from the algebraic point of view,  limited to Lie-Poisson systems, and emphasize the  rank changing singularities of the Poisson tensor, as opposed to addressing the bigger geometric picture based on the symplectic foliations possessed by all Poisson manifolds\,\cite{vaisman,weinstein98}.  To understand what we mean by rank changing singularity let  $M$ be a Poisson manifold endowed with a Poisson bracket $\{ G, H \} = \langle \partial_{\bm{\xi}} G, J(\bm{\xi}) \partial_{\bm{\xi}} H \rangle$; here $M=X^*$ (phase space).
Let us denote by ${S}_{\bm{\xi}}$ the totality of the Hamiltonian vectors 
$\{ \bm{\xi}, H \}$ ($\forall H\in C^\infty(M)$) evaluated at the point $\bm{\xi}\in M$; 
here $\{ \bm{\xi}, H \} = \ad_{\bm{h}}^* \bm{\xi} = [\bm{h}, \bm{\xi}]^*$ 
($\forall \bm{h}=\partial_{\bm{\xi}} H$).
In general, ${S}_{\bm{\xi}}$ is a subspace of $T_{\bm{\xi}}$.
The vector bundle ${S} M =\{{S}_{\bm{\xi}};\, \bm{\xi}\in M \}$ is a  \emph{distribution}.
The dimension $r(\bm{\xi})$ of ${S}_{\bm{\xi}}$ is a lower semicontinuous function of $\bm{\xi}\in M$.
Evidently,  $r(\bm{\xi})=\mathrm{Rank}\, J(\bm{\xi}) $.
The \emph{regular point} $\bm{\xi}$ is  where $r(\bm{\xi})=$ constant (local maximum) in the neighborhood of $\bm{\xi}$.
The set $\rho$ of regular points is an open set (not necessarily connected), and   
$\sigma = M \setminus\rho$ is the \emph{singularity}.
In the neighborhood of $\bm{\xi} \in \rho$, the Hamiltonian vectors foliate $M$ into symplectic leaves 
(a leaf will be denoted by $L$).
The Casimir $C$ (if it exists) is an integral of the kernel (null  space) of $J(\bm{\xi})$ (i.e., $J(\bm{\xi})\partial_{\bm{\xi}}C =0$),
implying that $C$ is constant on each leaf  ($i^*_L \rmd C =0$),
or the exact 1-form $\rmd C$ is the normal vector on  the leaves.
The singularity $\sigma$ is detected as the set of points where $C$ (or $\rmd C$) becomes singular.
Therefore, the Casimirs are useful elements of the Poisson algebra to characterize the foliation.
\\
\indent In general, however, the kernel is not necessarily integrable, and the nullity $=\mathrm{dim}\,M - r(\bm{\xi})$ can be larger than the number of independent Casimirs;
this is the ``Casimir deficit'' problem, and is typical at the singularity\,\cite{Morrison1998}.
We also note that some Casimirs fail to identify (parameterize) the symplectic foliation;
this occurs when a non-compact leaf is immersed densely in the phase space (a  Kronecker foliation); then clearly there cannot be an appropriate nonconstant function which is constant on all leaves\,\cite{weinstein97,zakrzewski}.  
In the present work, we only concentrate on the singularities where $r(\bm{\xi})$ goes to zero,
but there are more moderate class of singularities where $r(\bm{\xi})$ drops to some finite number.
The ``interior'' of such singularities still maintains dynamics\,\cite{YM2016}.
\\
\indent
In a broad sense,  our  study of chirality is related to  singular symplectic foliations, including $b$-Poisson or log-symplectic manifolds and  symplectic manifolds with boundary (see e.g.\ \cite{eva}). 
In higher dimensions  there are other kinds of singularities when the rank  drops by 2, 4, 6,. . .,  and besides spectra other interesting ``non-Hamiltonian" behavior is possible.  Such study is beyond the scope of the present paper.

\end{remark}

\subsection{Hamiltonian spectral symmetry}
\label{subsec:linearization}

It is known that the linearization of Hamilton's equation (\ref{Hamilton-1}) around a regular equilibrium point
(to be denoted by $\bm{\xi}_r$) yields a linearized Hamiltonian system,
and that spectra of the linearized generator have the \emph{Hamiltonian symmetry} (with a zero eigenvalue of multiplicity $\nu$);
if $\lambda$ is an eigenvalue of the generator, $-\lambda$ is also an eigenvalue  (implying a time-reversal symmetry),
and $\overline{\lambda}$ is also an eigenvalue  (see Remark\,\ref{remark:Hamiltonian_symmetry}).
Remember that our linearized PRS system (\ref{SeqEom}) does not obey this theorem,
even though the original PRS system (\ref{MT-1}) is Hamiltonian.
This is because (\ref{SeqEom}) is a linearization around a singular equilibrium point.
Let us see how the linearization works out differently for regular and singular equilibrium points.


First consider general  equilibrium points of Hamilton's equation (\ref{Hamilton-1}), 
we use the energy-Casimir functional (if the Poisson algebra has Casimirs) in place of the Hamiltonian,
but we will denote it by $H(\bm{\xi})$ for simplicity.
Around a given equilibrium point $\bm{\xi}_e$ (either regular or singular), 
we consider a small amplitude perturbation $\epsilon\tilde{\bm{\xi}}$ to write $\bm{\xi}=\bm{\xi}_e + \epsilon \tilde{\bm{\xi}}$.
Approximating (\ref{Hamilton-1}) to the first order of $ \epsilon$, we obtain the linearized Hamilton's equation:
\begin{equation}
\frac{\rmd}{\rmd t} \tilde{\bm{\xi}} = J(\bm{\xi}_e) H''(\bm{\xi}_e) \tilde{\bm{\xi}}  + J(\tilde{\bm{\xi}}) \bm{h} (\bm{\xi}_e),
\label{Hamilton-L}
\end{equation}
where 
\[
\bm{h} (\bm{\xi}_e) = (\partial_{\bm{\xi}} H)|_{\bm{\xi}=\bm{\xi}_e} \quad \in X
\]
is the Hamiltonian vector evaluated at the equilibrium point $\bm{\xi}_e$, and 
\[
( H''(\bm{\xi}_e)_{jk} ) = (\partial_{\xi_k} \partial_{\xi_j} H)|_{\bm{\xi}=\bm{\xi}_e} \quad \in \mathrm{Hom}(X^*,X)
\]
is the Hessian of $H(\bm{\xi})$ evaluated at $\bm{\xi}_e$.
In deriving $J(\bm{\xi}_e+\tilde{\bm{\xi}})-J(\bm{\xi}_e) = J(\tilde{\bm{\xi}})$,
we have used the fact that $J(\bm{\xi})$ is a linear function of $\bm{\xi}$ (see (\ref{Poisson_operator})).

Interestingly, different equilibrium points pick up different terms from the right-hand side of (\ref{Hamilton-L}),
depending on whether they are regular or singular:
\begin{itemize}
\item
\textbf{Regular equilibrium:}
The regular equilibrium $\bm{\xi}_r$ is a point where $\partial_{\bm{\xi}} H=0$;
hence the second term on the right-hand side of (\ref{Hamilton-L}) vanishes.
The Hamiltonian (obtained by expanding the energy-Casimir functional) of a perturbation is
\begin{equation}
H_L(\tilde{\bm{\xi}}) = \frac{1}{2} \langle {H''}(\bm{\xi}_e) \tilde{\bm{\xi}},  \tilde{\bm{\xi}}  \rangle ,
\label{Hamilton-L-ec-Hamiltonian}
\end{equation}
by which we may cast  (\ref{Hamilton-L}) into a Hamiltonian form:
\begin{equation}
\frac{\rmd}{\rmd t} \tilde{\bm{\xi}} = J(\bm{\xi}_r) \partial_{\tilde{\bm{\xi}}} H_L(\tilde{\bm{\xi}}) .
\label{Hamilton-L-ec-2}
\end{equation}
Notice that $J(\bm{\xi}_r) $ is the Poisson operator $J(\bm{\xi})$ evaluated at the fixed equilibrium point,
which defines a homogeneous Poisson algebra (see Remark\,\ref{remark:generalized_Lie-Poisson}).
The spectra of these equilibria  have the Hamiltonian symmetry (Remark\,\ref{remark:Hamiltonian_symmetry}).

\item
\textbf{Singular equilibrium:}
The singular equilibrium $\bm{\xi}_s$ is a point where $J(\bm{\xi}_s)=0$.
Then, the first term on the right-hand side of (\ref{Hamilton-L}) vanishes, and we have 
\begin{equation}
\frac{\rmd}{\rmd t} \tilde{\bm{\xi}} = J(\tilde{\bm{\xi}}) \bm{h} (\bm{\xi}_s)
=  [ \bm{h}(\bm{\xi}_s) , \tilde{\bm{\xi}} ]^*.
\label{Hamilton-LS}
\end{equation}
Here $\bm{h}(\bm{\xi}_s)$ is a fixed vector, so that the dynamics stems from $J(\tilde{\bm{\xi}})$.
There is no guarantee that (\ref{Hamilton-LS}) is a Hamiltonian system.
Indeed, the linearized PRS system (\ref{SeqEom}) is of this type, 
in which the Hamiltonian symmetry is broken\,\cite{YTM-rattleback}.
However, we also find that the linearized generator $\mathcal{A} = [ \bm{h}(\bm{\xi}_s) , \circ ]^*$ becomes a Hamiltonian vector field for a special class of Lie brackets.
We will identify such class of Lie algebras in  the next section.
\end{itemize}

While the singular linearized system (\ref{Hamilton-LS}) is not necessarily Hamiltonian, we have the following conservation laws.

\begin{proposition}[conservation laws] 
\label{proposition:conservation}
The linear system (\ref{Hamilton-LS}) has the following invariants:
\begin{enumerate}
\item
The Casimir $\mathcal{C}(\tilde{\bm{\xi}})$ of the original nonlinear system (\ref{Hamilton-1}),
which is evaluated for the perturbation $\tilde{\bm{\xi}}$.
\item
The first-order energy $H_1(\tilde{\bm{\xi}}) = \langle \bm{h}(\bm{\xi}_s), \tilde{\bm{\xi}} \rangle$.

\end{enumerate}
\end{proposition}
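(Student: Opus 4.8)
The plan is to verify each invariant by direct differentiation along the linear flow (\ref{Hamilton-LS}), using only the defining relation (\ref{coadjoint}) of the coadjoint bracket, the antisymmetry of the Lie bracket $[~,~]$, and the defining property of a Casimir. Throughout I would abbreviate $\bm{h} = \bm{h}(\bm{\xi}_s) \in X$ for the fixed Hamiltonian vector at the singular equilibrium, so that (\ref{Hamilton-LS}) reads $\ddt\tilde{\bm{\xi}} = [\bm{h}, \tilde{\bm{\xi}}]^*$. For the first-order energy $H_1(\tilde{\bm{\xi}}) = \langle \bm{h}, \tilde{\bm{\xi}}\rangle$, since $\bm{h}$ is constant I differentiate directly,
\[
\ddt H_1 = \langle \bm{h}, \ddt\tilde{\bm{\xi}}\rangle = \langle \bm{h}, [\bm{h}, \tilde{\bm{\xi}}]^*\rangle = \langle [\bm{h}, \bm{h}], \tilde{\bm{\xi}}\rangle ,
\]
where the last equality is (\ref{coadjoint}) with $\bm{x}=\bm{v}=\bm{h}$. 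This vanishes because the Lie bracket is antisymmetric, $[\bm{h}, \bm{h}] = 0$; hence $H_1$ is conserved.

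For the Casimir $\mathcal{C}(\tilde{\bm{\xi}})$, the key observation is that its defining property $\{\mathcal{C}, G\} = 0$ for all $G$, written through (\ref{Lie-Poisson-1}) as $\langle [\partial_{\bm{\eta}}\mathcal{C}, \partial_{\bm{\eta}}G], \bm{\eta}\rangle = 0$, holds at \emph{every} point $\bm{\eta}\in X^*$; since $\partial_{\bm{\eta}}G$ ranges over all of $X$ as $G$ varies, this is equivalent to the pointwise identity $\langle [\partial_{\bm{\eta}}\mathcal{C}, \bm{x}], \bm{\eta}\rangle = 0$ for all $\bm{x}\in X$ and all $\bm{\eta}\in X^*$. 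I would then differentiate $\mathcal{C}$ along the flow using the definition of the gradient,
\[
\ddt \mathcal{C}(\tilde{\bm{\xi}}) = \langle \partial_{\tilde{\bm{\xi}}}\mathcal{C}, [\bm{h}, \tilde{\bm{\xi}}]^*\rangle = \langle [\partial_{\tilde{\bm{\xi}}}\mathcal{C}, \bm{h}], \tilde{\bm{\xi}}\rangle ,
\]
the second equality again being (\ref{coadjoint}). Evaluating the pointwise Casimir identity at $\bm{\eta} = \tilde{\bm{\xi}}$ with $\bm{x} = \bm{h}$ shows that this expression is exactly zero, so $\mathcal{C}(\tilde{\bm{\xi}})$ is conserved.

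The steps are short because the Poisson operator $J(\bm{\xi})$ is linear in $\bm{\xi}$; the only conceptually nontrivial move is the last one, where an identity of the full nonlinear system is re-evaluated at the perturbation variable $\tilde{\bm{\xi}}$ itself. This is precisely what allows a generally nonlinear (e.g.\ quadratic) Casimir to survive as an exact invariant of the \emph{linear} system (\ref{Hamilton-LS}) --- a point worth emphasizing, since linear systems do not usually inherit nonlinear constants of motion. I expect no genuine obstacle; the only care needed is bookkeeping of the coadjoint pairing (\ref{coadjoint}) and the placement of arguments in $[~,~]$ versus $[~,~]^*$. In particular, neither Jacobi's identity nor any Hamiltonian structure of (\ref{Hamilton-LS}) is required, which is consistent with the fact that (\ref{Hamilton-LS}) need not itself be Hamiltonian.
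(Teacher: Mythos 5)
Your proof is correct and follows essentially the same route as the paper's: differentiate each quantity along the flow of (\ref{Hamilton-LS}), convert via the coadjoint pairing (\ref{coadjoint}), kill the Casimir term by evaluating the Casimir property (equivalently (\ref{Casimir_definition-2})) at the point $\tilde{\bm{\xi}}$, and kill the $H_1$ term by antisymmetry (your $[\bm{h},\bm{h}]=0$ is the same fact the paper phrases as antisymmetry of $J$). The only cosmetic difference is that you spell out why the Casimir condition holds pointwise for arbitrary $\bm{x}\in X$ (via linear functionals), which the paper absorbs into its citation of (\ref{Casimir_definition-2}).
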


\begin{proof}
For a Casimir $\mathcal{C}(\tilde{\bm{\xi}})$, 
(\ref{Casimir_definition-2}) implies
\begin{eqnarray*}
\frac{\rmd}{\rmd t} \mathcal{C}(\tilde{\bm{\xi}})
=  \langle \partial_{\tilde{\bm{\xi}}} \mathcal{C} , \frac{\rmd}{\rmd t} {\tilde{\bm{\xi}}} \rangle
&=&  \langle \partial_{\tilde{\bm{\xi}}} \mathcal{C} , J(\tilde{\bm{\xi}})  \bm{h}(\bm{\xi}_s) \rangle
\\
&=&  \langle [\partial_{\tilde{\bm{\xi}}} \mathcal{C} , \bm{h}(\bm{\xi}_s)],  \tilde{\bm{\xi}} \rangle
\\
&=&  - \langle [ \bm{h}(\bm{\xi}_s), \partial_{\tilde{\bm{\xi}}} \mathcal{C}],  \tilde{\bm{\xi}} \rangle
\\
&=& - \langle \bm{h}(\bm{\xi}_s) , J(\tilde{\bm{\xi}}) \partial_{\tilde{\bm{\xi}}} \mathcal{C} \rangle =0.
\end{eqnarray*}
We also observe,  by the anti-symmetry of $J$, 
\[
\frac{\rmd}{\rmd t} {H}_1(\tilde{\bm{\xi}})
=  \langle \bm{h}(\bm{\xi}_s) , \frac{\rmd}{\rmd t} {\tilde{\bm{\xi}}} \rangle
=  \langle \bm{h}(\bm{\xi}_s) , J(\tilde{\bm{\xi}}) \bm{h}(\bm{\xi}_s)  \rangle =0.
\]
\begin{flushright}
~~\qed
\end{flushright}
\end{proof}

Note that the first-order energy $H_1(\tilde{\bm{\xi}}) $ is different from the second-order energy $H_L( \tilde{\bm{\xi}}) $, 
given by (\ref{Hamilton-L-ec-Hamiltonian}),
which is an invariant of the regular linearized system (\ref{Hamilton-L-ec-2}).

\begin{remark}[Hamiltonian symmetry of spectra]
\label{remark:Hamiltonian_symmetry}
\normalfont
The generator (matrix) of a linear Hamiltonian system can be written as
\[
\mathcal{A} = J \mathcal{H},
\]
where $J$ is a constant coefficient Poisson matrix,
and $\mathcal{H}$ is a constant coefficient symmetric matrix
that  is  the Hessian of some Hamiltonian.
As seen above, the  quadratic form $H_L(\tilde{\bm{\xi}}):=\frac{1}{2} \langle \mathcal{H}\tilde{\bm{\xi}},\tilde{\bm{\xi}}\rangle$ is 
the  Hamiltonian for the linear Hamiltonian system
\[
\frac{\rmd}{\rmd t} {\tilde{\bm{\xi}}} = J \partial_{\tilde{\bm{\xi}}} H_L = \mathcal{A} \tilde{\bm{\xi}}.
\]
The  eigenvalues of the generator $\mathcal{A}$, which are the solutions of the characteristic equation
\begin{equation}
P(\lambda) := \mathrm{det}\,(\mathcal{A}-\lambda I) = 0,
\label{characteristic_equation}
\end{equation}
are the subjects of the discussion on the following symmetry.
By \emph{Hamiltonian symmetry}, we mean that every eigenvalue $\lambda$ always has the following counterparts that are simultaneously eigenvalues:
$-\lambda$, $\overline{\lambda}$ (and hence, $-\overline{\lambda}$).
Being $\overline{\lambda}$ a simultaneous eigenvalue is evident.
Suppose that $\mathcal{A} \bm{\zeta} = \lambda \bm{\zeta}$.
Since $\mathcal{A}$ is a real-coefficient matrix, the complex conjugate of this equation reads
$\mathcal{A} \overline{\bm{\zeta}} = \overline{\lambda} \, \overline{\bm{\zeta}}$.
Hence, $\overline{\lambda}$ is an eigenvalue and $\overline{\bm{\zeta}}$ is the corresponding eigenvector.
The \emph{time-reversal symmetry} $-\lambda$ is, however, not so obvious.
First, a Hamiltonian generator satisfies $J_c \mathcal{A}^{\mathrm{T}} J_c = \mathcal{A}$ with the canonical Poisson matrix (symplectic matrix) $J_c$; see (\ref{normal-J}).
Using this, together with $J_c J_c =-I$ and $\mathrm{det} J_c =1$, we observe
\begin{eqnarray*}
P(\lambda) &=& \mathrm{det} (J_c \mathcal{A} ^{\mathrm{T}} J_c + \lambda J_c I J_c )
\\
&=& (\mathrm{det} J_c) [\mathrm{det} (\mathcal{A} ^{\mathrm{T}} + \lambda I )] (\mathrm{det} J_c) 
\\
&=& \mathrm{det} (\mathcal{A}  + \lambda I ) = P(-\lambda).
\end{eqnarray*} 
Hence, $-\lambda$ also satisfies the characteristic equation.
\end{remark}
 
\section{Three-dimensional Lie-Poisson systems}
\label{section:3D}

The Bianchi classification of the 3-dimensional Lie algebras guides us to delineate the mathematical 
structure that leads to \emph{chirality} (the symmetry breaking of Hamiltonian spectra) around the 
singular equilibrium points of Lie-Poisson systems.
We start by reviewing the Bianchi classification.

\subsection{Bianchi classification of 3-dimensional Lie algebras}
\label{subsec:3D}

The real 3-dimensional Lie algebras can be classified by the scheme used to describe the Bianchi cosmologies, 
which divides them into nine types (e.g.\ \cite{Ellis,Ryan}). 
The multiplication tables are given in Table\,\ref{table:Bianchi}.

\begin{table}[tb]
\caption{Three-dimensional Lie algebras in the order of Bianchi classification
(see eg.\ \cite{Ryan}).
Type \rn{1} is abelian, so every $[\bm{e}_j,\bm{e}_k]$ is zero.
Type \rn{2} is the Heisenberg algebra.  
Type \rn{9} is $\mathfrak{so}(3)$.
Type \rn{8} may be regarded as the 3-dimensional Minkowski and describes the Kida vortex \cite{pjmMF}.
Notice that types $\mathrm{\rn{6}}_\eta$ and $\mathrm{\rn{7}}_\eta$ have a parameter $\eta\in\mathbb{R}$.
}
\[
{\small 
\begin{array}{ll} 
\begin{array}{r|ccc}
\makebox[2em][c]{\rn{2}}  & [\circ, \bm{e}_1] & [\circ, \bm{e}_2] & [\circ, \bm{e}_3]  \\
\hline
\bm{e}_1       & 0                           &  0                         &  0                         \\
\bm{e}_2       &  -                           &  0                         &  \bm{e}_1            \\
\end{array}
&
\begin{array}{r|ccc}
\makebox[2em][c]{\rn{3}}  & [\circ, \bm{e}_1] & [\circ, \bm{e}_2] & [\circ, \bm{e}_3]  \\
\hline
\bm{e}_1       & 0                           &  0                         &  \bm{e}_1            \\
\bm{e}_2       &  -                           &  0                         &  0                        \\
\end{array}
\\ ~&~ \\ 
\begin{array}{r|ccc}
\makebox[2em][c]{\rn{4}}  & [\circ, \bm{e}_1] & [\circ, \bm{e}_2] & [\circ, \bm{e}_3]  \\
\hline
\bm{e}_1       & 0                           &  0                         &  \bm{e}_1            \\
\bm{e}_2       &  -                           &  0                         &   \bm{e}_1+\bm{e}_2 \\
\end{array}
&
\begin{array}{r|ccc}
\makebox[2em][c]{\rn{5}}  & [\circ, \bm{e}_1] & [\circ, \bm{e}_2] & [\circ, \bm{e}_3]  \\
\hline
\bm{e}_1       & 0                           &  0                         &  \bm{e}_1            \\
\bm{e}_2       &  -                           &  0                         &  \bm{e}_2            \\
\end{array}
\\ ~&~ \\
\begin{array}{r|ccc}
\makebox[2em][c]{$\RN{6}_{\eta}$}  & [\circ, \bm{e}_1] & [\circ, \bm{e}_2] & [\circ, \bm{e}_3]  \\
\hline
\bm{e}_1       & 0                           &  0                         &  \bm{e}_1            \\
\bm{e}_2       &  -                           &  0                         &  \eta\bm{e}_2 \\
\end{array}
&
\begin{array}{r|ccc}
\makebox[2em][c]{$\RN{7}_{\eta}$}  & [\circ, \bm{e}_1] & [\circ, \bm{e}_2] & [\circ, \bm{e}_3]  \\
\hline
\bm{e}_1       & 0                           &  0                         &  \bm{e}_2            \\
\bm{e}_2       &  -                           &  0                         &  -\bm{e}_1+\eta\bm{e}_2 \\
\end{array}
\\ ~&~ \\ 
\begin{array}{r|ccc}
\makebox[2em][c]{\rn{8}}  & [\circ, \bm{e}_1] & [\circ, \bm{e}_2] & [\circ, \bm{e}_3]  \\
\hline
\bm{e}_1       & 0                           &  \bm{e}_3             &  \bm{e}_2            \\
\bm{e}_2       &  -                           &  0                         &  -\bm{e}_1 \\
\end{array}
&
\begin{array}{r|ccc}
\makebox[2em][c]{\rn{9}}  & [\circ, \bm{e}_1] & [\circ, \bm{e}_2] & [\circ, \bm{e}_3]  \\
\hline
\bm{e}_1       & 0                           &  \bm{e}_3             &  -\bm{e}_2            \\
\bm{e}_2       &  -                           &  0                         &  \bm{e}_1 \\
\end{array}

\end{array} 
}
\]
\label{table:Bianchi}
\end{table}


\begin{table}
\caption{Three-dimensional class-A Lie-Poisson algebras (Bianchi classification).
}
\vspace*{5mm}
\includegraphics[scale=0.6]{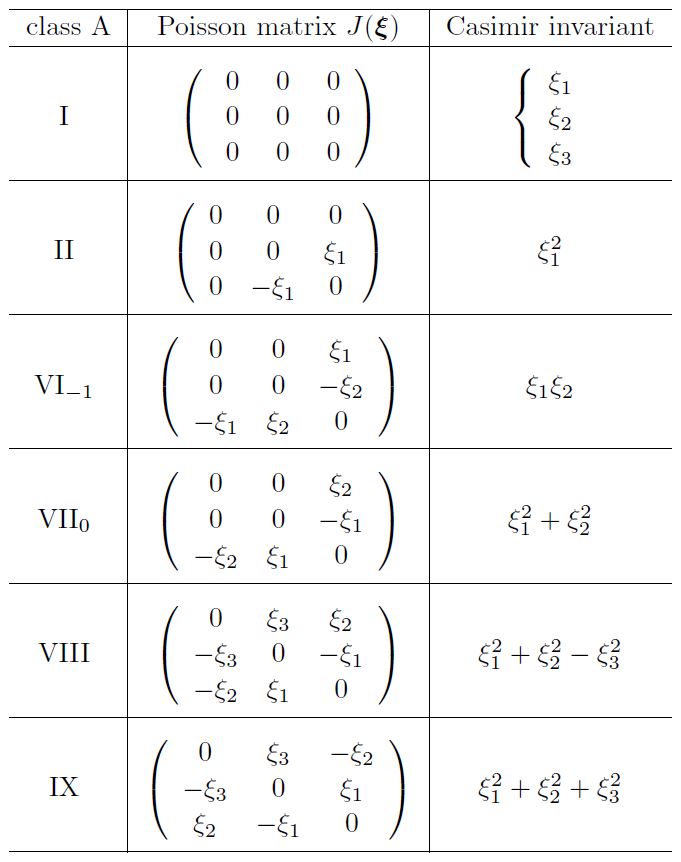}
\label{table:Bianchi-Casimirs-A}
\end{table}

\begin{table}
\caption{Three-dimensional class-B Lie-Poisson algebras (Bianchi classification).
To avoid redundancy, for type-$\mathrm{IV}_\eta$, $\eta\neq 0,1$.   The Casimir of type $\mathrm{VII}_{\eta\neq0}$ 
needs further classification:
$\eta^2>4$ gives
(denoting $\lambda_\pm=(-\eta\pm\sqrt{\eta^2-1})/2$)
 $C_{\mathrm{VII}_{\eta\neq 0}}=\lambda_- \log(-\lambda_- \xi_1 -\xi_2) - \lambda_+\log (\lambda_+ \xi_1 +\xi_2)$;
$\eta = \pm 2$ gives
$C_{\mathrm{VII}_{\eta\neq 0}}= \frac{\pm \xi_2 }{\xi_1 \mp \xi_2}+\log(\xi_1\mp \xi_2)$;
$\eta^2 < 4$ gives (putting $a = -\eta/2$ and $\omega=\sqrt{1-\eta^2/4}$, i.e.\ $\lambda_\pm = a\pm i\omega$)
$C_{\mathrm{VII}_{\eta\neq 0}}= 2 a \arctan \frac{a\xi_1 + \xi_2}{\omega \xi_1} -
\omega \log [ (a\xi_1 + \xi_2)^2 + (\omega \xi_1)^2 ] $.
}
\vspace*{5mm}
\includegraphics[scale=0.6]{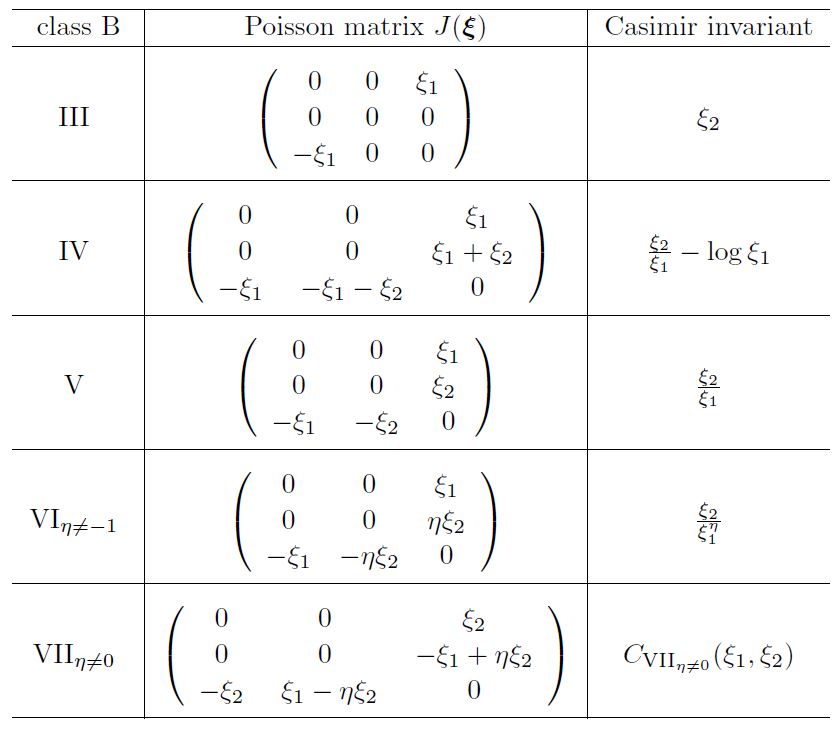}
\label{table:Bianchi-Casimirs-B}
\end{table}

Tables\,\ref{table:Bianchi-Casimirs-A} and \ref{table:Bianchi-Casimirs-B} summarize the 
$3\times3$ Poisson matrices $J$ defined as
\[
J_{jk} = c^\ell_{jk} \xi_\ell ,
\]
which gives the Lie-Poisson brackets $\{G,H \} = \langle \partial_{\bm{\xi}} G, J \partial_{\bm{\xi}}H\rangle$ as in 
  (\ref{Poisson_operator-2}).

Among the nine possibilities, type-IX corresponds to $\mathfrak{so}(3)$, 
which gives the Lie-Poisson matrix
\begin{equation}
J_{\mathrm{IX}}(\bm{\xi}) \bm{u} = \left( \begin{array}{ccc}
0 &  \xi_3 & -\xi_2 \\
-\xi_3 & 0 & \xi_1 \\
\xi_2 & -\xi_1 & 0
\end{array} \right) 
\left( \begin{array}{c} u^1 \\ u^2 \\ u^3 \end{array} \right) 
=
-\bm{\xi}\times \bm{u} .
\label{so(3)}
\end{equation}
For example, the Euler top obeys Hamilton's equation
$\frac{\rmd}{\rmd t}{\bm{\xi}} = J_{\mathrm{IX}}(\bm{\xi}) \partial_{\bm{\xi}} H $ with a Hamiltonian
$H = \sum_{j=1}^3 \xi_j ^2 /I_j $ ($I_j$ being  the inertial moment along the axis $\bm{e}^j$)\,\cite{Morrison1998}.

As announced in Sec.\,\ref{sec:PRS}, the PRS system of the  rattleback is a type-VI system;
with the $\mathrm{\rn{6}}_\eta $ Poisson matrix
$J_{\mathrm{\rn{6}}}(\bm{\xi})$ and a symmetric Hamiltonian $H(\bm{\xi})= \| \bm{\xi} \|^2/2$,
Hamilton's equations $\frac{\rmd}{\rmd t}{\bm{\xi}} = J_{\mathrm{\rn{6}}}(\bm{\xi}) \partial_{\bm{\xi}} H $,
under the correspondences $\xi_1=P$, $\xi_2=R$, $\xi_3=S$, 
and $\eta=-\alpha$ reproduces the PRS system (\ref{MT-1}).

\subsection{Class-A and class-B}
\label{subsec:class-A_class-B}

 The Bianchi types are divided into two classes: class A, composed of types $\mathrm{I}, \mathrm{II}, \mathrm{VI}_{-1}, \mathrm{VII}_0, \mathrm{VIII}$, and $\mathrm{IX}$,  and class B, composed of types $\mathrm{III}, \mathrm{IV}, \mathrm{V}, \mathrm{VI}_{\eta\neq-1}$, and $\mathrm{VII}_{\eta\neq 0}$. 
Somewhat fortuitously, this  classification turns out to separate non-chiral systems from chiral:
 the class-A systems maintain the Hamiltonian symmetry, while class-B systems have chiral spectra.
 Before analyzing the reason for this, we summarize some direct observations.

In Tables\,\ref{table:Bianchi-Casimirs-A} and \ref{table:Bianchi-Casimirs-B}, 
we list the Casimirs for the Lie-Poisson brackets associated with each algebra.  
 We find that the Casimirs of class-A Lie-Poisson brackets (Table\,\ref{table:Bianchi-Casimirs-A}) are all quadratic forms, 
while those of class-B (Table\,\ref{table:Bianchi-Casimirs-B}) are ``singular'' functions.
Therefore, the class-A Casimir leaves are algebraic varieties, each of which defines a two-dimensional symplectic manifold (see Fig.\,\ref{fig:Bianchi_leaves-A}).
 On the other hand, every class-B leaf contains a singularity of some kind, so is only locally symplectic (see Fig.\,\ref{fig:Bianchi_leaves-B}).

\begin{figure}
\begin{center}
\includegraphics[scale=0.5]{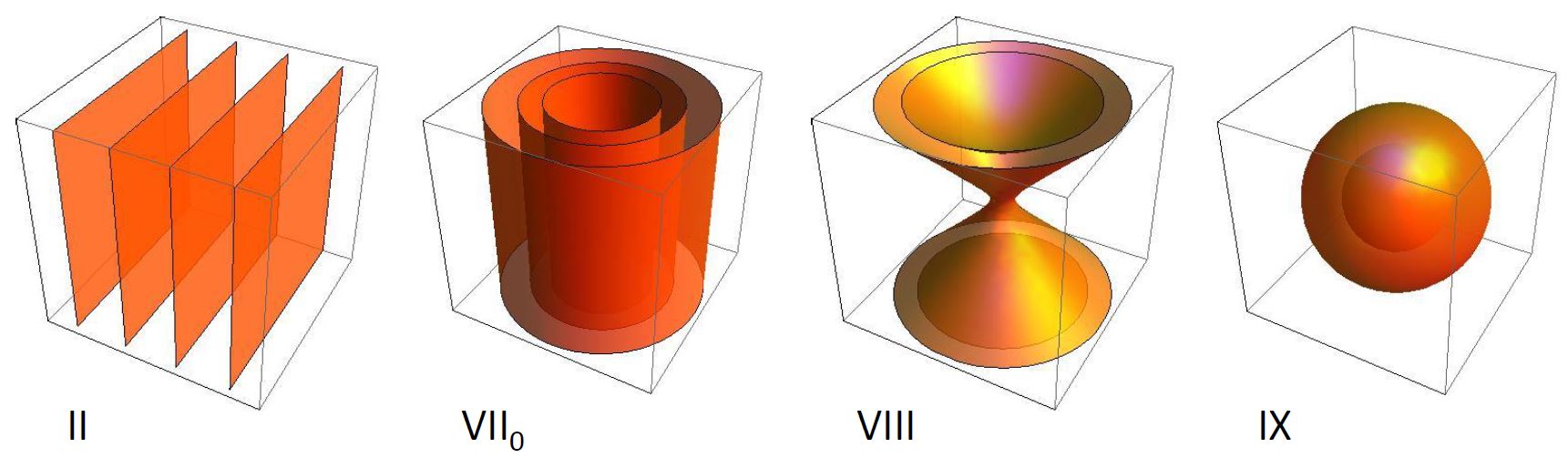}
\caption{
\label{fig:Bianchi_leaves-A}
The foliated phase spaces of the class-A Bianchi Lie-Poisson algebras.
The leaves are level sets of the Casimirs given in Table\,\ref{table:Bianchi-Casimirs-A}.  
}
\end{center}
\end{figure}

\begin{figure}
\begin{center}
\includegraphics[scale=0.65]{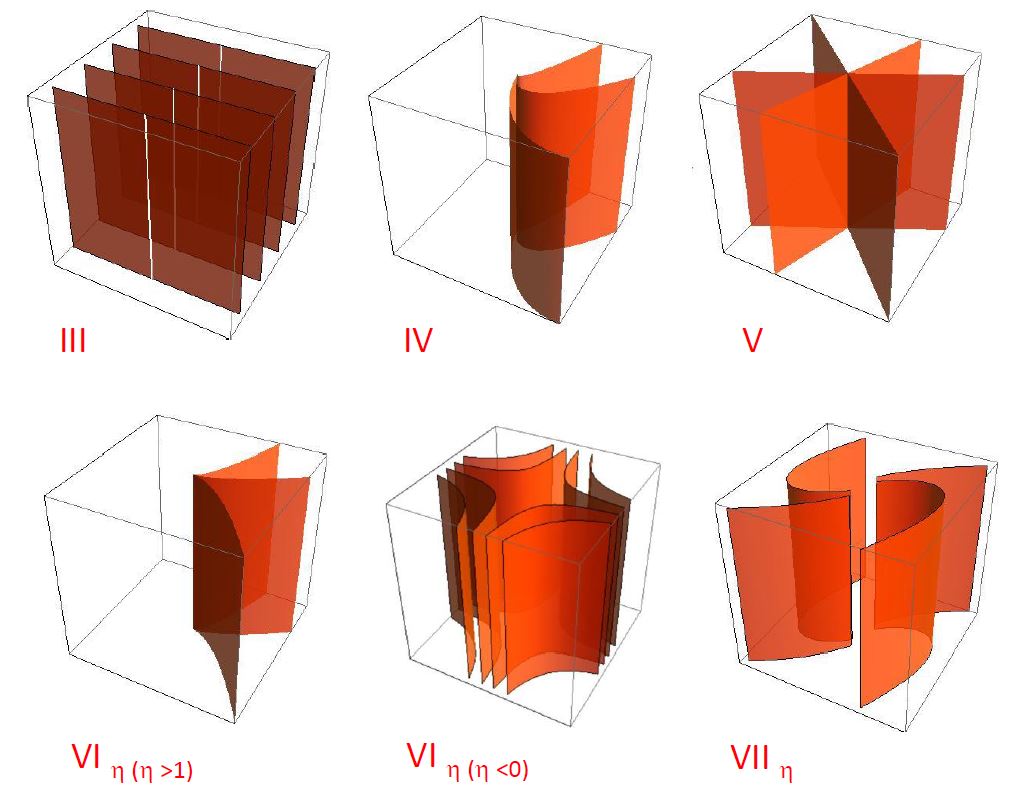}
\caption{
\label{fig:Bianchi_leaves-B}
The foliated phase spaces of the class-B Bianchi Lie-Poisson algebras.
The leaves are level sets of the Casimirs given in Table\,\ref{table:Bianchi-Casimirs-B}.  
}
\end{center}
\end{figure}
 

\subsection{Spectra of class-A and class-B Lie-Poisson systems around singularities}
\label{subsec:3D_spectra}

Let us calculate the spectra of the linearized systems around singular equilibrium points.
We exclude the trivial ($J(\bm{\xi})\equiv0$) type-I system. 
The singularity (the set of singular equilibrium points:
$\sigma = \{ \bm{\xi}_s \in X^*;\, J(\bm{\xi}_s)=0 \}$) varies from two-dimensional to zero-dimensional:
\begin{itemize} 
\item
$\sigma = \{ \bm{\xi}=(0~\xi_2 ~ \xi_3)^{\mathrm{T}} \}$
for type II and III.
\item
$\sigma = \{ \bm{\xi}=(0~\,0\,~\xi_3)^{\mathrm{T}}  \}$
from type IV through VII.
\item
$\sigma = \{ \bm{\xi}=(0~\,0\,~\,0\,)^{\mathrm{T}}  \}$
for type VIII and IX.
\end{itemize} 
The generator of the linear system is, as given in (\ref{Hamilton-LS}),
\[
( \mathcal{A}^\ell_j  ) = [ \bm{h}(\bm{\xi}_s), \circ ]^* = (c^\ell_{jk }h^k ),
\]
where $h^k = \partial_{\xi_k} H |_{\bm{\xi}_s}$.
In 3-dimensional systems, the orbit is given by the intersection of the levels of the Casimir $C(\tilde{\bm{\xi}})$ and the linearized energy $H_1(\tilde{\bm{\xi}})$  (Proposition\,\ref{proposition:conservation}).
Because of the foliation by $C(\tilde{\bm{\xi}})$, one of the eigenvalues of the linearized system (\ref{Hamilton-LS}) must be zero.
For the spectrum to be Hamiltonian,
the remaining two eigenvalues must be either a pair $\pm i\omega$ of imaginary numbers
or a pair $\pm\gamma$ of real numbers.
Therefore, the Hamiltonian symmetry implies a time-reversal symmetry.

Tables\,\ref{table:Bianchi-linear-A} and \ref{table:Bianchi-linear-B} summarize the spectra of each linearized system.
It is evident that the spectra of the class-A systems (Table\,\ref{table:Bianchi-linear-A}) have the Hamiltonian symmetry, while those of the class-B systems (Table\,\ref{table:Bianchi-linear-B}) do not.
For example, class-B, type-VI with $h^3=\partial_{\xi_3}H$ ($H=\|\bm{\xi}\|^2/2$) reproduces the rattleback chiral spectra under the correspondences $\xi^3=S$ and $\eta=-\alpha$ (see (\ref{SeqEom})).

As mentioned in Sec.\,\ref{subsec:linearization}, the linearization about a singular equilibrium point does not yield a (linear) Hamiltonian system (unlike the linearization about a regular equilibrium point),
so it is more surprising that the class-A systems do have Hamiltonian spectra than that the class-B systems break the Hamiltonian symmetry.
The reason for these correspondences will be elucidated  in Sec.\,\ref{sec:deformation}.

\begin{table}
\caption{Linearized class-A systems around a singular equilibrium point $\bm{\xi}_s$.
The singularity $\sigma = \{ \bm{\xi};\, J(\bm{\xi})=0 \}$,
the generator $\mathcal{A}=[\bm{h}(\bm{\xi}_s), \,\circ\,]^*$, and the characteristic equation 
$\mathrm{det}(\lambda I -\mathcal{A})=0$ of each class-A Bianchi Lie-Poisson system are summarized
(those of class-B systems are given in Table\,\ref{table:Bianchi-linear-B}).
We denote $h^j =\partial_{\xi_j} H|_{\bm{\xi}_s}$.
Type-I algebra is omitted, because it is abelian so that the Poisson bracket is trivial. 
}
\vspace*{5mm}
\begin{center}
\includegraphics[scale=0.6]{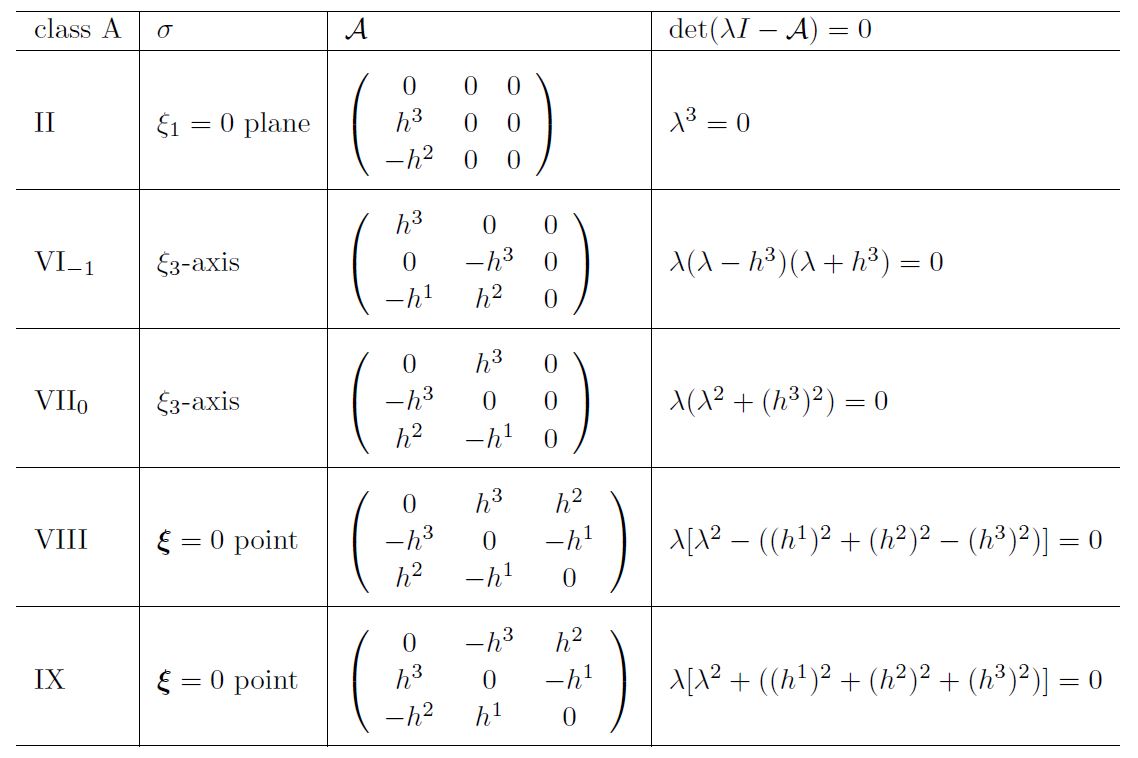}
\end{center}
\label{table:Bianchi-linear-A}
\end{table}

\begin{table}
\caption{Linearized class-B systems around a singular equilibrium point $\bm{\xi}_s$.
}
\vspace*{5mm}
\begin{center}
\includegraphics[scale=0.6]{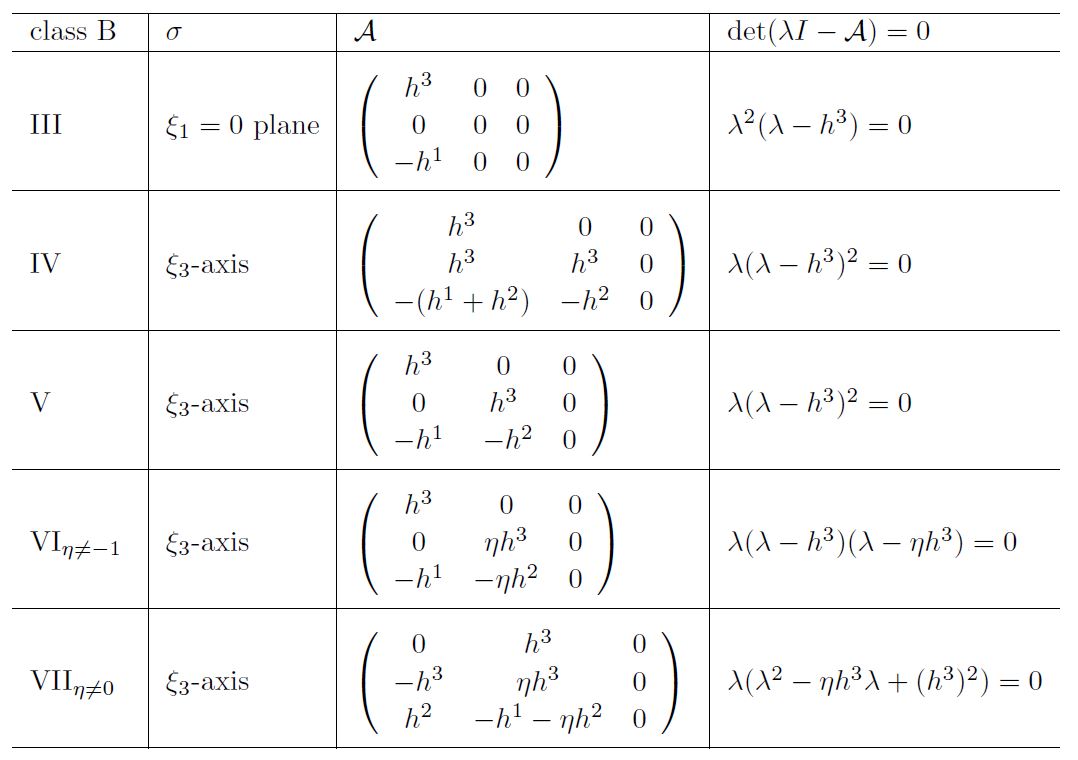}
\end{center}
\label{table:Bianchi-linear-B}
\end{table}

\subsection{Geometrical interpretation}
We notice that all Bianchi Lie-Poisson matrices (Tables\,\ref{table:Bianchi-Casimirs-A} and \ref{table:Bianchi-Casimirs-B}) are reversed ($J\mapsto - J$)
by the transformation $\mathscr{T}_3:\,\xi_3 \mapsto -\xi_3$.
Hence, Hamilton's equation of motion (\ref{Hamilton-1}) is invariant
with respect to the time-reversal $\mathscr{T}_t:\,t \mapsto -t$ combined with $\mathscr{T}_3$.
Evidently, all Casimir leaves are invariant with respect to the transformation $\mathscr{T}_3$
(in Figs.\,\ref{fig:Bianchi_leaves-A} and \ref{fig:Bianchi_leaves-B}, $\xi_3$ is the vertical axis).

The linearized systems inherit this time-reversal symmetry of the original Hamiltonian system.
In the context of spectral symmetry, however, there is an additional constraint to be taken into account.
Notice that the transformation $\mathscr{T}_3$ flips the sign of $h^3=\partial_{\xi_3} H$.
With this transformation, all spectra of class-A systems (Table\,\ref{table:Bianchi-linear-A}), 
as well as those of class-B systems (Table\,\ref{table:Bianchi-linear-B}), have the time-reversal symmetry.
However, in the linear theory,
the coefficients included in the generator $\mathscr{A}$ are fixed numbers pertinent to the equilibrium state $\bm{\xi}_s$.
Therefore, in the argument of spectral symmetry, $h^3$ must not be transformed.
Here, class A  contrasts with class B, because of  the existence of alternative transformations.
The Poisson matrices of type-II, type-$\textrm{VII}_0$, type-VIII, and type-IX are reversed by the transformation
$\mathscr{T}_2:\,\xi_2 \mapsto -\xi_2$.
The Poisson matrix of type-$\textrm{VI}_{-1}$ is reversed by
$\mathscr{T}_{12}:\, (\xi_1~\xi_2)  \mapsto (\xi_2~\xi_1)$.
These transformation yield the Hamiltonian (time-reversal symmetric) spectra of the corresponding linearized generators.
To the contrary, the Poisson matrices of class B do not have such symmetry;
evidently, the Casimir invariants of class B algebras are not invariant with respect to $\mathscr{T}_2$ or
$\mathscr{T}_{12}$.

Let us see how the non-Hamiltonian (chiral) spectra are created in the class-B systems.
The existence of the singularity $\sigma$ on the Casimir leaves
(excepting those of type-$\mathrm{VI}_{\eta\neq -1}$ with $\eta<0$, which will be discussed separately)
poses an obstacle for the time-reversal symmetry.
By Proposition~\ref{proposition:conservation}, the orbits are on the levels of Casimirs $C(\tilde{\bm{\xi}})$.
The levels of the linearized energy $H_L(\tilde{\bm{\xi}})$ are planes including $\bm{\xi}_s\in\sigma$;
hence the orbits are connected to the singularities, implying that only real eigenvalues can occur.
The Casimir invariants $C(\xi_1,\xi_2)$ of  class-B systems, however, forbids the co-existence a pair time constants
$\gamma$ and $-\gamma$.

As noted above, type-$\mathrm{VI}_{\eta\neq -1}$ with $\eta<0$ is somewhat special.
Although the singularity $\sigma = \xi_3$-axis is not included in the Casimir leaves,
each level of $C(\bm{\xi})$ is divided into separate surfaces,
preventing circulating orbits around $\bm{\xi}_s\in\sigma$.
Hence, only real eigenvalues can occur, and only  the special value $\eta=-1$   yields  symmetric eigenvalues $\lambda = \pm h^3$.

\section{Deformation of Lie-Poisson algebras}
\label{sec:deformation}

\subsection{Deformation of observables and its reflection to Lie algebras}
\label{subsec:deformation}

The central idea of the following exploration is to characterize the variety of 3-dimensional Lie-Poisson algebras (and their underlying Lie algebras) as \emph{deformations} from a \emph{mother} algebra\,\cite{deformation}.
We will show that the symmetry and asymmetry of the deformations correspond to class A and B.
 
Remembering the argument of Sec.\,\ref{subsec:Lie-Poisson}, it stands to reason that we ask how phenomena will vary, when we modify the observables (cf.\ Remark\,\ref{remark:generalized_Lie-Poisson}).
With $M \in \mathrm{End}(X^*)$, we deform the Lie-Poisson bracket (\ref{Lie-Poisson-1}) as
\begin{equation}
\{ G, H \}_M = \langle [\partial_{\bm{\xi}} G,  \partial_{\bm{\xi}} H], M \bm{\xi} \rangle
 = \langle \partial_{\bm{\xi}} G,  [\partial_{\bm{\xi}} H, M \bm{\xi} ]^* \rangle .
\label{deformation_def}
\end{equation}
Hence, the deformed Poisson matrix (operator) is
\begin{equation}
J_M (\bm{\xi}) = J(M \bm{\xi}).
\label{deformation_def-2}
\end{equation}
With the adjoint matrix (operator) $M^{\mathrm{T}} \in \mathrm{End}(X)$, we may rewrite (\ref{deformation_def}) as
\begin{equation}
\{ G, H \}_M = \langle M^{\mathrm{T}} [\partial_{\bm{\xi}} G,  \partial_{\bm{\xi}} H], \bm{\xi} \rangle 
= \langle [\partial_{\bm{\xi}} G,  \partial_{\bm{\xi}} H]_M , \bm{\xi} \rangle .
\label{deformation_def-3}
\end{equation}
Therefore, we may interpret $\{~,~\}_M$ as the Lie-Poisson bracket produced by the deformed Lie bracket $[~,~]_M = M^{\mathrm{T}} [~,~]$.
For this \emph{deformation} to be allowed, $[~,~]_M $ must satisfy Jacobi's identity (other conditions for Lie brackets are clearly satisfied).
Let us study how this condition applies by examining the 3-dimensional Lie algebras, for which we have a complete list as reviewed in Sec.\,\ref{section:3D}.

\subsection{Three-dimensional systems: deformation of $\mathfrak{so}(3)$}
\label{subsec:3D_deformation}

We show that all types of the 3-dimensional Lie algebras can be derived by the \emph{deformations} from one \emph{simple Lie algebra}.
The ``mother'' is the type-IX algebra
(denoted by $\mathfrak{g}_{\mathrm{IX}}$, which is nothing but $\mathfrak{so}(3)$) that is characterized by
\begin{equation}
[\bm{e}_i, \bm{e}_j]_{\mathrm{IX}} = \epsilon_{ijk}\bm{e}_k.
\label{Lie-algebra-IX}
\end{equation}
In vector-analysis notation, we may write $[\bm{a}, \bm{b}]_{\mathrm{IX}} = \bm{a}\times\bm{b}$.

The multiplication table of the deformed bracket is given by calculating
\begin{equation}
[\bm{e}_i, \bm{e}_j]_{M} = M^{\mathrm{T}} [\bm{e}_i, \bm{e}_j]_{\mathrm{IX}} .
\label{bracket_of_g_M}
\end{equation}
If this bracket satisfies Jacobi's identity, we obtain a deformed Lie algebra, which we will denote by $\mathfrak{g}_M$.
Before discussing the Jacobi constraint need for the possible deformation matrix $M^{\mathrm{T}}$, 
we derive it directly from the multiplication tables of the Lie algebras (see Table\,\ref{table:Bianch-3D_by_deformation}).
The following relations are readily deduced:
\begin{enumerate}
\item
For a Class-A algebra, $M$ is symmetric,
while for a Class-B algebra, $M$ is non-symmetric.
This fact brings about fundamental differences between both classes; to be discussed later.

\item
Let $\mathfrak{g}'$ denote the derived algebra of $\mathfrak{g}$
(which is the ideal of $\mathfrak{g}$ consisting of elements such that $[\bm{e}_j,\bm{e}_k]$).
Since $\mathfrak{g}_{\mathrm{IX}}' = \mathfrak{g}_{\mathrm{IX}}$,
\[
\mathrm{dim}\, \mathfrak{g}_M' = \mathrm{Rank}\, M ,
\]
where $\mathrm{Rank}\, M= 3-\mathrm{dim}\,\mathrm{Ker}\,M = 3-\mathrm{dim}\,\mathrm{Coker}\,M^{\mathrm{T}} $.
In the table, 
\begin{eqnarray*}
& & \mathrm{dim}\,\mathfrak{g}_{\mathrm{I}}' = 0, \\
& & \mathrm{dim}\,\mathfrak{g}_{\mathrm{II}}' = \mathrm{dim}\,\mathfrak{g}_{\mathrm{III}}' = 1,\\
& & \mathrm{dim}\,\mathfrak{g}_{\mathrm{III}}' = \cdots =\mathrm{dim}\,\mathfrak{g}_{\mathrm{VII}}' = 2,\\
& & \mathrm{dim}\,\mathfrak{g}_{\mathrm{VIII}}' = \mathrm{dim}\,\mathfrak{g}_{\mathrm{IX}}' = 3.\\
\end{eqnarray*}
\end{enumerate}

\begin{table}
\caption{Bianchi classification of 3-dimensional Lie algebra.
Here we unify the classification by a general matrix $M$.
}
\vspace*{5mm}
\includegraphics[scale=0.6]{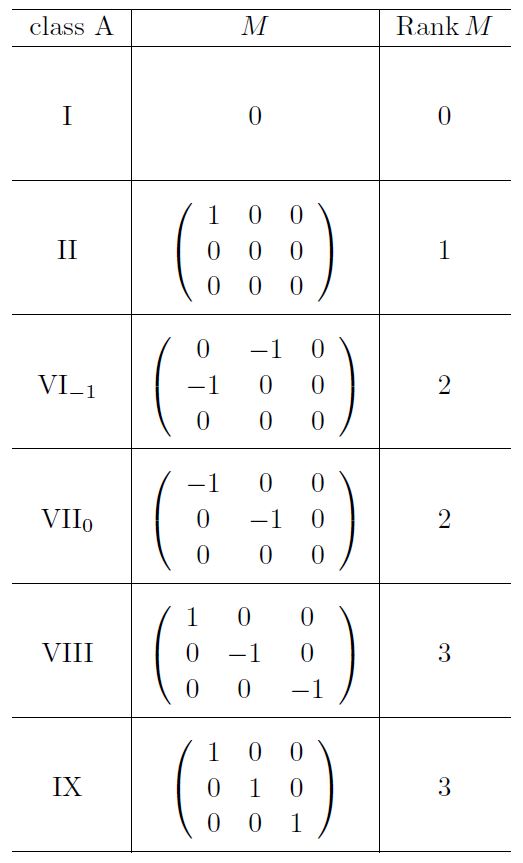}~~~
\mbox{\raisebox{11mm}{\includegraphics[scale=0.6]{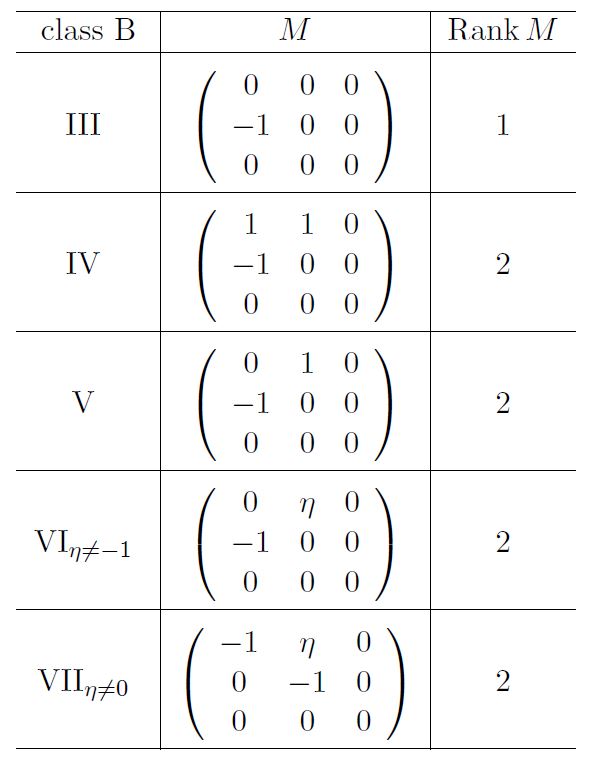}}}
\label{table:Bianch-3D_by_deformation}
\end{table}

Now we examine the conditions on $M$ for $[~,~]_M =M^{\mathrm{T}}[~,~]_{\mathrm{IX}}$ to be a Lie bracket. 
We can produce all 3-dimensional Lie algebras by the following process.
The dimension of the derived algebra plays the role of a guide
(cf.\,\cite{Jacobson}).
\begin{enumerate}
\item
For $\mathrm{dim}\,\mathfrak{g}_{M}' = 3$, only symmetric $M$ is allowed;
otherwise, Jacobi's identity does not hold.
Since $M^{\mathrm{T}}$ must not have a kernel (to obtain $\mathrm{dim}\,\mathfrak{g}_{M}' = 3$),
we have to demand
\[
[[\bm{e}_1,\bm{e}_2]_M , \bm{e}_3]_{\mathrm{IX}} +
[[\bm{e}_2,\bm{e}_3]_M , \bm{e}_1]_{\mathrm{IX}}  +
[[\bm{e}_3,\bm{e}_1]_M , \bm{e}_2]_{\mathrm{IX}}  = 0.
\]
Inserting $[\bm{e}_j,\bm{e}_k]_M = M^{\mathrm{T}} [\bm{e}_j,\bm{e}_k]_{\mathrm{IX}} $, the left-had side reads
\[
(M^{\mathrm{T}}_{23}-M^{\mathrm{T}}_{32}) \bm{e}_1 + 
(M^{\mathrm{T}}_{31}-M^{\mathrm{T}}_{13}) \bm{e}_2 +
(M^{\mathrm{T}}_{12}-M^{\mathrm{T}}_{21}) \bm{e}_3 .
\]
Hence we need $M^{\mathrm{T}}_{jk}=M^{\mathrm{T}}_{kj}$ for all $j\neq k$.
From this observation, it is also evident that, for every symmetric $M$, regardless of its rank,
$[~,~]_M =  M^{\mathrm{T}}[~,~]_{\mathrm{IX}}$ is a Lie bracket.
Hence, all class-A algebras are produced by some symmetric $M$.
For degenerate $M$ (i.e.\  for $\mathrm{dim}\,\mathfrak{g}_{M}' < 3$), however,
the symmetry condition can be weakened, and  some non-symmetric $M$ can still define Lie algebras.

\item
To define $\mathrm{dim}\,\mathfrak{g}_{M}' = 2$, 
we suppose the  matrix $M$  is  rank 2 so  that $\mathrm{Ker}\,M =\mathrm{Coker}\,M (= \mathrm{Ker}\,M^{\mathrm{T}}) = \{\bm{e}_3\}$,
i.e.\  $M=N\oplus0$ with a regular $2\times2$ matrix $N$
(notice that all rank-2 matrices $M$ of Table\,\ref{table:Bianch-3D_by_deformation} have such forms;
the reason why we need this setting will become clear in  the following construction).
Then, $\mathfrak{g}_{M}' $ is abelian;
for $\bm{e}_1, \bm{e}_2 \in \mathfrak{g}_{M}' $,
\begin{equation}
[\bm{e}_1,\bm{e}_2]_M = M^{\mathrm{T}} [\bm{e}_1,\bm{e}_2]_{\mathrm{IX}} = M^{\mathrm{T}} \bm{e}_3=0.
\label{2D_abelian_derived_algebra}
\end{equation}
The multiplication table is completed by evaluating
${[} \bm{e}_1, \bm{e}_3{]}_M$ and ${[} \bm{e}_2, \bm{e}_3{]}_M$.
By   definition,
\[
[ \circ, \bm{e}_3]_M = M^{\mathrm{T}} [\circ, \bm{e}_3]_{\mathrm{IX}}.
\]
Since the range of the operator on the right-hand side  excludes 
$\mathrm{Coker}\, M^{\mathrm{T}} = \mathrm{Ker}\, M=\{ \bm{e}_3\}$,
we see that $\mathfrak{g}_M'$ is indeed an ideal of $\mathfrak{g}_M$
(to put it in another way, if the condition $ \mathrm{Ker}\, M=\{ \bm{e}_3\}$ is violated,
it causes a contradiction with the  derived algebra $\mathfrak{g}_M'$ being  an ideal).
To evaluate $[ \circ, \bm{e}_3]_M$ for $\mathfrak{g}_M'$, we may define the 
$2\times2$ matrix
\[
A = N^{\mathrm{T}} [\circ, \bm{e}_3]_{\mathrm{IX}} = 
N^{\mathrm{T}} 
\left(\begin{array}{cc}
0 & -1 \\
~1~ & 0 \end{array}\right).
\]
This $A$ is identical to the matrix given in Jacobson\,\cite{Jacobson} Eq.\ (18), by which we obtain
\[
\left(\begin{array}{c}
{[} \bm{e}_1, \bm{e}_3{]}_M \\ {[}\bm{e}_2, \bm{e}_3{]}_M 
\end{array}\right) 
= A \left(\begin{array}{c}
\bm{e}_1 \\ \bm{e}_2
\end{array}\right) .
\]
Interestingly, for every regular matrix $A$ (thus, for every regular matrix $N$), 
the deformed product $[~,~]_M$ satisfies the Jacobi identity 
(each $[[\bm{e}_i,\bm{e}_j]_M,\bm{e}_k]_M$ vanishes separately),
so $\mathfrak{g}_M$ is a Lie algebra.
This is primarily due to the fact that the derived algebra $\mathfrak{g}_M'$ is abelian.
If the condition $ \mathrm{Ker}\, M^{\mathrm{T}} =\{ \bm{e}_3\}$ is violated,
(\ref{2D_abelian_derived_algebra}) does not hold, 
and then Jacobi's identity is not satisfied (see Remark\,\ref{remark:degenerate_M}).
Therefore, we do need both $ \mathrm{Ker}\, M^{\mathrm{T}} = \mathrm{Ker}\, M=\{ \bm{e}_3\}$
(i.e.\  $M=N\oplus0$ with regular $N$) to derive a $\mathrm{dim}\,\mathfrak{g}'_M=2$ algebra.

\item
For $\mathrm{dim}\,\mathfrak{g}_{M}' = 1$, 
a symmetric rank-1 $M$ defines a Class-A algebra, that is type-II (Heisenberg algebra).
There is  another possibility.
Let $\bm{e}_1$ be the element of the 1-dimensional $\mathfrak{g}_{M}'$.
Then, $\bm{e}_2, \bm{e}_3 \in \mathrm{Ker}\,M$.
Except for the symmetric one, the only possibility is of the form   
\[
M = 
\left(\begin{array}{ccc}
0 & 0 & 0 \\
-1 & ~0~ & ~0~ \\
0 & 0 & 0
\end{array}\right),
\]
which gives 
\[
[\bm{e}_1,\bm{e}_2]_M = 0, \quad
[\bm{e}_1,\bm{e}_3]_M = \bm{e}_1, \quad
[\bm{e}_2,\bm{e}_3]_M = 0.
\]
We may check that Jacobi's identity holds.  This is the type-III algebra.

\item
Evidently $M=0$ yields 
$\mathrm{dim}\,\mathfrak{g}_{M}' = 0$, which corresponds to the abelian type-I algebra.

\end{enumerate}

\begin{remark}[inadequate $M$]
\label{remark:degenerate_M}
\normalfont
For the derivation of $\mathrm{dim}\,\mathfrak{g}_M' =2$,
we assumed that $M$ is such that $N\oplus0$ with regular $N$.
Let us demonstrate that other types of degenerate $M$ deteriorate the deformation.

(1) First, consider
\[
M= \left(\begin{array}{ccc}
~1~ & ~1~ & ~0~ \\
0 & 0 & 1 \\
0 & 0 & 0
\end{array}\right).
\]
which has $\mathrm{Ker}\,M=\bm{e}_1-\bm{e}_2$, while $\mathrm{Ker}\,M^{\mathrm{T}}=\bm{e}_3$.
The multiplication table becomes
\[
[\bm{e}_1,\bm{e}_2]_M = 0, \quad
[\bm{e}_2,\bm{e}_3]_M = \bm{e}_1+\bm{e}_2, \quad
[\bm{e}_3,\bm{e}_1]_M = \bm{e}_3 \notin \mathfrak{g}'_M.
\]
Hence $\mathfrak{g}'_M$ fails to be an ideal.

(2) Take the transposed one as $M$:
\[
M= \left(\begin{array}{ccc}
~1~ & ~0~ & ~0~ \\
1 & 0 & 0\\
0 & 1 & 0
\end{array}\right).
\]
which has $\mathrm{Ker}\,M=\bm{e}_3$, while $\mathrm{Ker}\,M^{\mathrm{T}}=\bm{e}_1-\bm{e}_2$.
The multiplication table becomes
\[
[\bm{e}_1,\bm{e}_2]_M = \bm{e}_2, \quad
[\bm{e}_2,\bm{e}_3]_M = \bm{e}_1, \quad
[\bm{e}_3,\bm{e}_1]_M = \bm{e}_1 ,
\]
which violates the Jacobi's identity:
\[
[[\bm{e}_1,\bm{e}_2]_M , \bm{e}_3]_M +
[[\bm{e}_2,\bm{e}_3]_M , \bm{e}_1]_M +
[[\bm{e}_3,\bm{e}_1]_M , \bm{e}_2]_M = 
\bm{e}_1-\bm{e}_2.
\]
Notice that the residual is in $\mathrm{Ker}\,M^{\mathrm{T}}$.
\end{remark}

Summarizing the forgoing results, we have

\begin{theorem}[deformation of $\mathfrak{so}(3)$]
\label{theorem:3D}
Every 3-dimensional real Lie bracket can be written as $[~,~]_M=M^{\mathrm{T}}[~,~]_{\mathrm{IX}}$ with $M\in\mathrm{End}(\mathbb{R}^3)$ which is chosen from the following two classes:
\begin{enumerate}
\item
class A: $M$ is an arbitrary symmetric $3\times3$ matrix.
\item
class B: $M= N \oplus 0$ ($N$ is an arbitrary asymmetric $2\times2$ matrix).
\end{enumerate}
Accordingly, we have a unified representation of all 3-dimensional Lie-Poisson brackets:
\begin{equation}
\{ G,H \}_M = \langle [\partial_{\bm{\xi}} G, \partial_{\bm{\xi}} H]_M, \bm{\xi}\rangle 
=  \langle [\partial_{\bm{\xi}} G, \partial_{\bm{\xi}} H]_{\mathrm{IX}}, M\bm{\xi}\rangle .
\label{3D_Lie-Poisson_unified}
\end{equation}
The corresponding Poisson operator is
\begin{equation}
J_M(\bm{\xi}) \circ = J_{\mathrm{IX}}(M\bm{\xi}) \circ =[ \circ, M\bm{\xi}]^*_{\mathrm{IX}} = -(M\bm{\xi})\times \circ .
\label{3D_Poisson_op_unified}
\end{equation}
The singularity (where the rank of the Poisson operator becomes zero) is
\[
\sigma = \mathrm{Ker}\,M.
\]
\end{theorem}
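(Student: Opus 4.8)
The plan is to prove the theorem in three coordinated pieces: (1) show that the two families of matrices $M$ listed (symmetric $M$, and $M=N\oplus 0$ with asymmetric $N$) always yield valid Lie brackets; (2) show that these exhaust all 3-dimensional Lie algebras; and (3) verify the Poisson operator and singularity formulas, which are essentially bookkeeping once (1)--(2) are in place. The organizing principle throughout is $\dim\mathfrak{g}_M' = \mathrm{rank}\,M$, established already by relation (2) preceding the theorem, since $\mathfrak{g}_{\mathrm{IX}}'=\mathfrak{g}_{\mathrm{IX}}$ forces $\mathfrak{g}_M' = M^{\mathrm{T}}\mathfrak{g}_{\mathrm{IX}} = \mathrm{Ran}\,M^{\mathrm{T}}$.

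For the sufficiency direction (1), I would stratify by $\mathrm{rank}\,M \in \{0,1,2,3\}$, exactly as in the construction carried out just above the theorem statement. The rank-3 and general symmetric cases follow from the Jacobi computation already performed: for symmetric $M^{\mathrm{T}}$ the cyclic sum $\sum_{\mathrm{cyc}}[[\bm{e}_i,\bm{e}_j]_M,\bm{e}_k]_{\mathrm{IX}}$ reduces to terms proportional to $(M^{\mathrm{T}}_{jk}-M^{\mathrm{T}}_{kj})$, which vanish, so \emph{every} symmetric $M$ gives a Lie bracket regardless of rank --- this covers all of class A in one stroke. For the asymmetric rank-2 case $M=N\oplus 0$, the key lemma is equation (\ref{2D_abelian_derived_algebra}): because $\mathrm{Ker}\,M^{\mathrm{T}}=\{\bm{e}_3\}$ coincides with $\mathrm{Coker}\,M$, the derived algebra $\mathfrak{g}_M'=\mathrm{span}\{\bm{e}_1,\bm{e}_2\}$ is abelian and is an ideal, whence each triple bracket $[[\bm{e}_i,\bm{e}_j]_M,\bm{e}_k]_M$ vanishes separately and Jacobi holds for \emph{every} regular $N$ (invoking the $2\times2$ reduction to Jacobson's matrix $A$). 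I would present Remark~\ref{remark:degenerate_M} as the sharp counterpoint showing why the decomposability condition $M=N\oplus 0$ (rather than merely $\mathrm{rank}\,M=2$) is indispensable.

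For the necessity/exhaustiveness direction (2), the argument is a dimension count against Table~\ref{table:Bianch-3D_by_deformation}: Bianchi's list gives every 3-dimensional real Lie algebra, and one checks case by case that each entry arises as $M^{\mathrm{T}}[~,~]_{\mathrm{IX}}$ for some $M$ in one of the two admissible families, with the derived-dimension constraint $\dim\mathfrak{g}_M'=\mathrm{rank}\,M$ pinning down which $M$ is forced. Concretely, $\dim\mathfrak{g}_M'=3$ forces symmetric invertible $M$ (types VIII, IX); $\dim\mathfrak{g}_M'=2$ splits into symmetric-rank-2 (class A: $\mathrm{VI}_{-1},\mathrm{VII}_0$) versus asymmetric $N\oplus0$ (class B: IV, V, $\mathrm{VI}_{\eta\neq-1},\mathrm{VII}_{\eta\neq0}$); $\dim\mathfrak{g}_M'=1$ gives symmetric rank-1 (type II) or the single asymmetric normal form (type III); and $M=0$ gives type I. The main obstacle I anticipate is precisely this exhaustiveness step: one must argue that \emph{no other} asymmetric $M$ escapes the classification, i.e. that any admissible asymmetric deformation can be brought, by a change of basis (a $GL(3,\mathbb{R})$ conjugation respecting $[~,~]_{\mathrm{IX}}$, equivalently an $\mathfrak{so}(3)$-automorphism rescaled), into the block form $N\oplus 0$. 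This amounts to showing that the Jacobi constraint itself forces the asymmetric part of $M$ to be carried on a rank-$\le 2$ block complementary to $\mathrm{Ker}\,M$, which is the content implicit in Remark~\ref{remark:degenerate_M}; making this "only these $M$ survive" claim rigorous, rather than merely matching Bianchi's tabulated entries, is where the real work lies.

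Finally, piece (3) is immediate from the definitions. The unified bracket formula (\ref{3D_Lie-Poisson_unified}) is just (\ref{deformation_def}) specialized to the mother algebra $\mathfrak{g}_{\mathrm{IX}}$, using $[\bm{a},\bm{b}]_{\mathrm{IX}}=\bm{a}\times\bm{b}$, and the Poisson operator (\ref{3D_Poisson_op_unified}) follows by substituting $M\bm{\xi}$ for $\bm{\xi}$ in (\ref{so(3)}), giving $J_M(\bm{\xi})\circ = -(M\bm{\xi})\times\circ$. For the singularity, $J_M(\bm{\xi})=0$ as an operator means $(M\bm{\xi})\times\bm{u}=0$ for all $\bm{u}$, which holds iff $M\bm{\xi}=0$, i.e. $\bm{\xi}\in\mathrm{Ker}\,M$; thus $\sigma=\mathrm{Ker}\,M$, completing the statement.
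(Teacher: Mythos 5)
Your proposal is correct and takes essentially the same route as the paper's own argument: Jacobi sufficiency by rank stratification (the cyclic-sum computation showing any symmetric $M$ works, and the abelian-derived-ideal argument for $N\oplus 0$), exhaustiveness by matching Bianchi's list against the tabulated deformation matrices, and the Poisson-operator and singularity formulas as immediate bookkeeping. The only remark worth adding is that the ``real work'' you anticipate --- proving that no admissible asymmetric $M$ escapes the block form $N\oplus 0$ --- is not actually required by the theorem as stated (it claims every Lie bracket is representable by some $M$ from the two classes, not that these are the only $M$ satisfying Jacobi), and the paper itself settles that point only informally, through the $\mathrm{dim}\,\mathfrak{g}_M'$ case analysis and the counterexamples of Remark~\ref{remark:degenerate_M}.
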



\begin{corollary}[Casimirs of class-A Lie-Poisson brackets]
\label{cor:q-Casimir}
Let $M\in\mathrm{End}(\mathbb{R}^3)$ be a symmetric matrix (of any rank).  Then 
the Lie-Poisson bracket $\{G,H\}_M = \langle [\partial_{\bm{\xi}} G,\partial_{\bm{\xi}} H]_{\mathrm{IX}},M\bm{\xi}\rangle$
has a Casimir given by a quadratic form
\begin{equation}
C(\bm{\xi})= \frac{1}{2} \langle \bm{\xi},M \bm{\xi}\rangle .
\label{q-Casimir}
\end{equation}
\end{corollary}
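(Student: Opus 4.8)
The plan is to verify the Casimir property in its operator form. By (\ref{Casimir_definition-2}), a function $C$ is a Casimir of $\{\cdot,\cdot\}_M$ if and only if $J_M(\bm{\xi})\,\partial_{\bm{\xi}}C = 0$ for every $\bm{\xi}$; indeed, using the antisymmetry of $J_M$ one has $\{C,G\}_M = \langle\partial_{\bm{\xi}}C, J_M(\bm{\xi})\partial_{\bm{\xi}}G\rangle = -\langle\partial_{\bm{\xi}}G, J_M(\bm{\xi})\partial_{\bm{\xi}}C\rangle$, which vanishes for all $G$ precisely when $J_M(\bm{\xi})\partial_{\bm{\xi}}C=0$. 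So it suffices to compute the gradient of the quadratic form $C(\bm{\xi})=\tfrac12\langle\bm{\xi},M\bm{\xi}\rangle$ and feed it into the deformed Poisson operator (\ref{3D_Poisson_op_unified}).

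First I would compute the gradient directly from its defining relation. Expanding to first order in $\epsilon$,
\[
C(\bm{\xi}+\epsilon\tilde{\bm{\xi}}) - C(\bm{\xi}) = \frac{\epsilon}{2}\left(\langle\tilde{\bm{\xi}},M\bm{\xi}\rangle + \langle\bm{\xi},M\tilde{\bm{\xi}}\rangle\right) + O(\epsilon^2).
\]
Here the hypothesis $M=M^{\mathrm{T}}$ is the crucial ingredient: it lets me rewrite $\langle\bm{\xi},M\tilde{\bm{\xi}}\rangle=\langle M\bm{\xi},\tilde{\bm{\xi}}\rangle$, collapsing the two cross terms into $\epsilon\langle M\bm{\xi},\tilde{\bm{\xi}}\rangle$ and hence identifying $\partial_{\bm{\xi}}C = M\bm{\xi}$. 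I would flag that this step is exactly where class A and class B part ways: for a nonsymmetric $M$ the gradient is instead the symmetrized $\tfrac12(M+M^{\mathrm{T}})\bm{\xi}$, which is what eventually obstructs a polynomial Casimir in the class-B case.

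The conclusion is then immediate from the $\mathfrak{so}(3)$ cross-product structure inherited by every deformation. Substituting $\partial_{\bm{\xi}}C=M\bm{\xi}$ into (\ref{3D_Poisson_op_unified}) gives
\[
J_M(\bm{\xi})\,\partial_{\bm{\xi}}C = -(M\bm{\xi})\times(M\bm{\xi}) = 0,
\]
since the cross product of any vector with itself vanishes. By the operator criterion above, $\{C,G\}_M=0$ for all $G\in C^\infty(X^*)$, so $C$ is a Casimir. I do not expect any genuine obstacle here: the only point requiring care is the gradient computation, where the symmetry of $M$ must be invoked to obtain $M\bm{\xi}$ rather than its symmetric part, after which the antisymmetry of the mother bracket $[~,~]_{\mathrm{IX}}$ does all the remaining work.
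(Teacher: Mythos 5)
Your proof is correct and follows essentially the same route as the paper: use the symmetry of $M$ to identify $\partial_{\bm{\xi}}C = M\bm{\xi}$, then insert this into the deformed Poisson operator $J_M(\bm{\xi}) = J_{\mathrm{IX}}(M\bm{\xi})$ so that the result is the cross product $-(M\bm{\xi})\times(M\bm{\xi}) = 0$. The extra details you supply (the operator criterion via antisymmetry of $J_M$, the explicit first-order expansion, and the remark that an asymmetric $M$ would yield only the symmetrized gradient) are all sound but do not change the argument.
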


\begin{proof}
By the symmetry of $M$, $\partial_{\bm{\xi}} C = M\bm{\xi}$.  Inserting this, we obtain
\[
J_M(\bm{\xi}) \partial_{\bm{\xi}} C = J_{\mathrm{IX}}(M\bm{\xi}) M \bm{\xi}  = -(M\bm{\xi})\times( M \bm{\xi})=0.
\]
\begin{flushright}
~~\qed
\end{flushright}
\end{proof}

\bigskip

This corollary does not preclude the existence of Casimirs for the class-B algebras;
as shown in Table\,\ref{table:Bianchi-Casimirs-B}, they are \emph{singular} functions
in the sense that each Casimir leaf contains the singularity $\sigma$ where $\mathrm{Rank}\,J(\bm{\xi})$ drops to zero.
As we will see in the next subsection, this singularity is related to the chirality of the spectra.

The 3-dimensional Lie algebras are special in that all of them have a unique mother $\mathfrak{so}(3)$,
and the symmetry/asymmetry of the deformation matrix $M$ determines the classification into A and B.
As we will see later (Sec.\,\ref{sec:dimension>3}), this is no longer true in higher dimensions, 
so that we will need to introduce ``class C'' to separate the classes A and B. 
Before extending to higher dimensions, we show how the class A algebras
yield Hamiltonian symmetric spectra around the singularities.
This property will be used as the determinant of class A in higher dimensions. 

\subsection{Spectra of 3-dimensional Lie-Poisson systems}
\label{subsec:linearization_around_sing-3D}
Let us analyze the spectra of the 3-dimensional Lie-Poisson systems linearized around the singularities
$\bm{\xi}_s \in \sigma = \mathrm{Ker}\,M$.
The aim is to prove the Hamiltonian symmetry for the class-A systems and, conversely, that this symmetry is generally broken for   class-B systems. As is easily inferred, the Hamiltonian symmetry of the class-A systems is due to the symmetry of the deformation matrix $M$.


Let  $H(\bm{\xi})$ be an arbitrary Hamiltonian (energy-Casimir functional), and denote
$\bm{h} =\partial_{\bm{\xi}} H|_{\bm{\xi}_s} $, which is a fixed vector.
The linearized equation (\ref{Hamilton-LS}) reads
\begin{equation}
\frac{\rmd}{\rmd t} \tilde{\bm{\xi}} =  [ \bm{h}, \tilde{\bm{\xi}} ]_M^* = [\bm{h}, M \tilde{\bm{\xi}}]_{\mathrm{IX}}^*.
\label{Hamilton-LS-M}
\end{equation}
Because type-IX is semi-simple, we may formally calculate as 
$[\bm{a},\bm{b}]_{\mathrm{IX}}^*
= [\bm{a},\bm{b}]_{\mathrm{IX}}=\bm{a}\times\bm{b}$
(see Remark\,\ref{remark:semi-simple}).
Therefore, the right-hand side of (\ref{Hamilton-LS-M}) reads
\begin{equation}
[ \bm{h}, \tilde{\bm{\xi}} ]_M^* =
[ \bm{h}, M \tilde{\bm{\xi}} ]_{\mathrm{IX}}^*
=
- [ M \tilde{\bm{\xi}},\bm{h} ]_{\mathrm{IX}}^*
=
- J_{\mathrm{IX}}(\bm{h}) M \tilde{\bm{\xi}} .
\label{LS-M-J_h}
\end{equation}
Notice that $ M \tilde{\bm{\xi}}$ is now regarded as a member of $X$.
With the constant-coefficient matrix $\mathscr{J}_{\bm{h}} = -J_{\mathrm{IX}}(\bm{h})$,
the linearized equation (\ref{Hamilton-LS-M}) can be written as
\begin{equation}
\frac{\rmd}{\rmd t} \tilde{\bm{\xi}} =  \mathscr{J}_{\bm{h}}  M \tilde{\bm{\xi}}.
\label{Hamilton-LS-M-2}
\end{equation}
By the definition, $\mathscr{J}_{\bm{h}}$ defines a homogeneous Poisson bracket
$\{ G, H \}_{\bm{h}} = \langle \partial_{\tilde{\bm{\xi}}} G,  \mathscr{J}_{\bm{h}} \partial_{\tilde{\bm{\xi}}} H \rangle$
(see Remark\,\ref{remark:generalized_Lie-Poisson}).
If $M$ is a symmetric matrix (class A), we can define a `Hamiltonian''
\[
\mathcal{C}(\tilde{\bm{\xi}}) = \frac{1}{2} \langle M \tilde{\bm{\xi}}, \tilde{\bm{\xi}} \rangle,
\]
by which the linearized equation (\ref{Hamilton-LS-M-2}) reads Hamilton's equation
\begin{equation}
\frac{\rmd}{\rmd t} \tilde{\bm{\xi}} =  \mathscr{J}_{\bm{h}}  \partial_{\tilde{\bm{\xi}}} \mathcal{C} (\tilde{\bm{\xi}}) .
\label{Hamilton-LS-M-Hamiltonian}
\end{equation}
Hence, the spectra of class-A have the Hamiltonian symmetry
(Remark\,\ref{remark:Hamiltonian_symmetry}).
The Hamiltonian $\mathcal{C}(\tilde{\bm{\xi}})$
is nothing but the Casimir evaluated for the perturbation (see Corollary\,\ref{cor:q-Casimir}).

Remembering Proposition\,\ref{proposition:conservation}, one may postulate that the other invariant,
the linearized energy $H_1(\tilde{\bm{\xi}}) =  \langle \bm{h} ,\tilde{\bm{\xi}} \rangle$
is the Casimir of $\mathscr{J}_{\bm{h}}$.
One can easily confirm that this is true.
It is remarkable that the roles of the Casimir and Hamiltonian are switched when   linearization around the singularity.

For the class-B Lie-Poisson systems, $M$ is not symmetric, so
$\mathscr{J}_{\bm{h}}  M$ is not a Hamiltonian generator;
hence, its spectrum need'nt have  the Hamiltonian symmetry. 
However,  $C(\tilde{\bm{\xi}})$ and $H_1(\tilde{\bm{\xi}}) $ are still invariant (Proposition\,\ref{proposition:conservation}).

In summary, we have the following corollary of Theorem\,\ref{theorem:3D}:

\begin{corollary}[Hamiltonian spectral symmetry]
\label{cor:Hamiltonian-spectrum}
A three-dimensional class-A Lie-Poisson system,
given by a symmetric deformation matrix $M$, has Hamiltonian symmetric spectra when linearized around a singular equilibrium point $\bm{\xi}_s \in \mathrm{Ker}(M)$.

\end{corollary}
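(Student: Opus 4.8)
The plan is to assemble the identifications already carried out in Section~\ref{subsec:linearization_around_sing-3D} and then supply the one structural observation needed to handle the degeneracy of the reduced Poisson matrix. First I would recall, via Theorem~\ref{theorem:3D}, that a class-A system corresponds to a symmetric $M$, and that linearizing about $\bm{\xi}_s\in\mathrm{Ker}\,M$ yields, through (\ref{Hamilton-LS})--(\ref{Hamilton-LS-M-2}), the constant-coefficient generator
\[
\mathcal{A} = \mathscr{J}_{\bm{h}}\, M, \qquad \mathscr{J}_{\bm{h}} = -J_{\mathrm{IX}}(\bm{h}) = \bm{h}\times\circ ,
\]
where the reduction of the coadjoint bracket $[\bm{h},\,\cdot\,]^*_M$ to this matrix product exploited the semisimplicity of $\mathfrak{so}(3)$ (Remark~\ref{remark:semi-simple}) to replace $[\,\cdot\,,\,\cdot\,]^*_{\mathrm{IX}}$ by $[\,\cdot\,,\,\cdot\,]_{\mathrm{IX}}$. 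The matrix $\mathscr{J}_{\bm{h}}$ is skew-symmetric, and because $M$ is symmetric in class A it is exactly the Hessian of the quadratic form $\mathcal{C}(\tilde{\bm{\xi}})=\tfrac12\langle M\tilde{\bm{\xi}},\tilde{\bm{\xi}}\rangle$ of Corollary~\ref{cor:q-Casimir}. Thus $\mathcal{A}=\mathscr{J}_{\bm{h}}\,\partial_{\tilde{\bm{\xi}}}\mathcal{C}$ has the canonical Hamiltonian form $J\mathcal{H}$ with $J$ skew and $\mathcal{H}$ symmetric, which is precisely the hypothesis invoked in Remark~\ref{remark:Hamiltonian_symmetry}.

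I expect the main obstacle to be that the symmetry proof in Remark~\ref{remark:Hamiltonian_symmetry} is written for the nondegenerate canonical matrix $J_c$, whereas $\mathscr{J}_{\bm{h}}$ here is a $3\times3$ real skew-symmetric matrix and is therefore necessarily singular (its kernel is $\mathrm{span}\,\bm{h}$). The cleanest way around this in the three-dimensional case is a direct eigenvalue count rather than a literal appeal to the remark. Since an odd-order real skew-symmetric matrix has vanishing determinant, $\det\mathcal{A}=\det\mathscr{J}_{\bm{h}}\det M=0$, so $\lambda=0$ is an eigenvalue---consistent with the Casimir foliation noted in Section~\ref{subsec:3D_spectra}. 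Since the trace of the product of a skew-symmetric matrix and a symmetric matrix vanishes, $\mathrm{tr}\,\mathcal{A}=\mathrm{tr}(\mathscr{J}_{\bm{h}}M)=0$, so the three eigenvalues sum to zero. The spectrum is therefore $\{0,\mu,-\mu\}$, and reality of $\mathcal{A}$ forces $\mu$ to be either real or purely imaginary; in both cases the full Hamiltonian symmetry $\lambda\mapsto-\lambda,\overline{\lambda}$ holds.

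As an alternative that will carry over to higher dimensions, I would instead invoke the Lie--Darboux normal form (\ref{normal-J}): restricting the flow to the two-dimensional symplectic leaf through $\bm{\xi}_s$ (the complement of $\mathrm{Ker}\,\mathscr{J}_{\bm{h}}$) brings $\mathscr{J}_{\bm{h}}$ to the nondegenerate form $J_c$, so the reduced generator satisfies Remark~\ref{remark:Hamiltonian_symmetry} verbatim and supplies the symmetric pair $\pm\mu$, while the transverse Casimir direction contributes the zero eigenvalue. In writing this up I would emphasize that the entire mechanism hinges on $M=M^{\mathrm{T}}$: it is the symmetry of $M$ that makes $\mathcal{A}$ a (skew)$\times$(symmetric) product, which both annihilates the trace and identifies $\mathcal{C}$ as a genuine Hamiltonian. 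For class B, where $M\neq M^{\mathrm{T}}$, precisely this structure fails, which is the contrast the paper draws immediately after the corollary.
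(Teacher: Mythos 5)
Your proposal reproduces the paper's own core argument (Sec.~\ref{subsec:linearization_around_sing-3D}): you use semisimplicity of $\mathfrak{so}(3)$ to write the linearized generator as the constant-coefficient product $\mathcal{A}=\mathscr{J}_{\bm{h}}M$ with $\mathscr{J}_{\bm{h}}=-J_{\mathrm{IX}}(\bm{h})$ skew-symmetric, and you recognize that for symmetric $M$ this is Hamilton's equation for the homogeneous bracket generated by $\mathscr{J}_{\bm{h}}$ with the quadratic Casimir $\mathcal{C}(\tilde{\bm{\xi}})=\tfrac12\langle M\tilde{\bm{\xi}},\tilde{\bm{\xi}}\rangle$ of Corollary~\ref{cor:q-Casimir} as Hamiltonian. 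Where you genuinely depart from the paper is in how you conclude: the paper simply invokes Remark~\ref{remark:Hamiltonian_symmetry}, whose written proof assumes the nondegenerate canonical matrix $J_c$ (it uses $J_cJ_c=-I$ and $\det J_c=1$), whereas $\mathscr{J}_{\bm{h}}$ is a $3\times3$ real skew matrix, necessarily singular with kernel $\mathrm{span}\,\bm{h}$; you flag this mismatch and close it in two valid ways. Your counting argument --- $\det\mathcal{A}=\det\mathscr{J}_{\bm{h}}\det M=0$ gives the zero eigenvalue, $\mathrm{tr}(\mathscr{J}_{\bm{h}}M)=0$ for a skew-times-symmetric product forces the remaining pair to sum to zero, and reality of $\mathcal{A}$ makes that pair real or purely imaginary --- is elementary and airtight, but is special to dimension three. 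Your Lie--Darboux alternative, compressing $\mathcal{A}$ to the invariant plane $\langle\bm{h},\tilde{\bm{\xi}}\rangle=\mathrm{const}$ (on which $H_1$ of Proposition~\ref{proposition:conservation} is precisely the Casimir of $\mathscr{J}_{\bm{h}}$) so that Remark~\ref{remark:Hamiltonian_symmetry} applies verbatim while the transverse Casimir direction contributes the zero eigenvalue, is the version that scales to the $n$-dimensional Theorem~\ref{theorem:Hamiltonian-spectrum} and is what the paper tacitly relies on when it asserts the symmetry ``with a zero eigenvalue of multiplicity $\nu$.'' In short, your proof is correct, follows the paper's structural identification, and is more careful than the paper at the single point where its citation of Remark~\ref{remark:Hamiltonian_symmetry} is loose.
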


\begin{remark}[linearized class-A system]
\label{remark:Hamiltonian-generator}
\normalfont
Corollary\,\ref{cor:Hamiltonian-spectrum} explains the observation in Sec.\,\ref{subsec:3D_spectra}.
The mathematical structure underlying the class-A linearized systems has the following two common ingredients 
that produce Hamiltonian symmetric spectra around the singularities:
\begin{enumerate}
\item
The full antisymmetry of the ``mother'' bracket $[\bm{x},\bm{\phi}]_{\mathrm{\rn{9}}} =  [\bm{x},\bm{\phi}]^*_{\mathrm{\rn{9}}} = \epsilon_{ijk}x^i\phi_j\bm{e}^k$, 
which is used in (\ref{LS-M-J_h}) to obtain the Poisson matrix 
$\mathscr{J}_{\bm{h}} = -J_{\mathrm{\rn{9}}}(\bm{h})$.
\item
The symmetry of the deformation matrix $M$, 
which is used in (\ref{Hamilton-LS-M-2}) to define the ``Hamiltonian'' 
$\frac{1}{2}\langle M\tilde{\bm{\xi}},\tilde{\bm{\xi}}\rangle$.
\end{enumerate}
\end{remark}

\section{Extension to higher dimensions}
\label{sec:dimension>3}

For dimension greater than three, the range of \emph{deformation} falls short of encompassing all possible Lie algebras.  
Yet, we can produce a class of Lie algebras (and the associated Lie-Poisson brackets)
by symmetric and asymmetric deformations of some fully antisymmetric Lie algebras.
We propose an extended classification:

\begin{definition}[classification into A, B and C]
\label{def:classification-ABC}
Let $\mathfrak{g}$ be an $n$-dimensional real Lie algebra.
\begin{itemize}
\item
If $\mathfrak{g}$ is fully antisymmetric (i.e.\  the Lie bracket is given by fully antisymmetric structure constants),
or it is the deformation of some fully antisymmetric Lie algebra by a symmetric matrix $M\in\mathrm{End}(\mathbb{R}^n)$,
we say that $\mathfrak{g}$ is class A.
\item
If $\mathfrak{g}$ is the deformation of some fully antisymmetric Lie algebra by an asymmetric matrix $M\in\mathrm{End}(\mathbb{R}^n)$,
we say that $\mathfrak{g}$ is class B.
\item
If $\mathfrak{g}$ is neither class A nor class B, 
we say that $\mathfrak{g}$ is class C.
\end{itemize}
\end{definition}

Remember that every 3-dimensional Lie algebra is either class A or class B,
because all of them are produced by deformations of the unique ``mother'' $\mathfrak{so}(3)$.
We can easily generalize Corollary\,\ref{cor:Hamiltonian-spectrum} to  arbitrary dimension:

\begin{theorem}[Hamiltonian spectral symmetry]
\label{theorem:Hamiltonian-spectrum}
Suppose that $\mathfrak{g}_M$ is a real $n$-dimensional class-A Lie algebra endowed with a
Lie bracket $[~,~]_M=M^{\mathrm{T}}[~,~]_{\mathrm{AS}}$, where $[~,~]_{\mathrm{AS}}$ is a fully antisymmetric Lie bracket,
and $M \in \mathrm{End}(\mathbb{R}^n)$ is a symmetric matrix.
Then, the linearized generator 
\[
\mathcal{A} = -[ \bm{h}, M \circ]^*_{\mathrm{AS}}
\quad (\bm{h}\in \mathfrak{g}_{M})
\]
has Hamiltonian symmetric spectra.
On the other hand, the linearization of class-B or class-C system has chiral (non-Hamiltonian) spectra.
\end{theorem}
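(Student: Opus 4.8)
The plan is to reduce, in both directions, to the single constant-coefficient generator obtained at the singularity and to read off its spectral parity from the algebraic type of the deformation. First I would linearize at a singular equilibrium $\bm{\xi}_s \in \sigma = \mathrm{Ker}\,M$. Since the deformed Poisson operator is $J_M(\bm{\xi}) = J_{\mathrm{AS}}(M\bm{\xi})$, we have $J_M(\bm{\xi}_s) = J_{\mathrm{AS}}(0) = 0$, so the first term of the linearized equation (\ref{Hamilton-L}) drops and only the singular piece (\ref{Hamilton-LS}) survives. Exactly as in (\ref{LS-M-J_h}), and using that full antisymmetry of $[~,~]_{\mathrm{AS}}$ gives $[~,~]^*_{\mathrm{AS}} = [~,~]_{\mathrm{AS}}$ (Remark~\ref{remark:semi-simple}), the generator becomes $\mathcal{A} = \mathscr{J}_{\bm{h}} M$ (the operator of the theorem, up to the overall sign convention, with $\mathscr{J}_{\bm{h}} = -J_{\mathrm{AS}}(\bm{h})$). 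The fully antisymmetric structure constants make $\mathscr{J}_{\bm{h}}$ a real skew-symmetric matrix for every fixed $\bm{h}$; this is the single structural input that both the A-part and the converse exploit.

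For the class-A (positive) direction I would argue the direct generalization of Corollary~\ref{cor:Hamiltonian-spectrum}. With $M$ symmetric, $\mathcal{A} = \mathscr{J}_{\bm{h}} M = J\mathcal{H}$ is a Hamiltonian generator in the sense of Remark~\ref{remark:Hamiltonian_symmetry}, with skew $J = \mathscr{J}_{\bm{h}}$, symmetric Hessian $\mathcal{H} = M$, and conserved quadratic form $\frac{1}{2}\langle M\tilde{\bm{\xi}}, \tilde{\bm{\xi}}\rangle$ (the Casimir of Corollary~\ref{cor:q-Casimir} evaluated on the perturbation). The only point needing care is that $\mathscr{J}_{\bm{h}}$ is singular, so the symplectic manipulation of Remark~\ref{remark:Hamiltonian_symmetry} (which uses an invertible $J_c$) does not apply verbatim. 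I would replace it by a characteristic-polynomial identity valid with no invertibility hypothesis: writing $P(\lambda) = \det(JM - \lambda I)$ and transposing gives $P(\lambda) = (-1)^n \det(MJ + \lambda I)$, and since $MJ$ and $JM$ share the same characteristic polynomial ($\det(AB - tI) = \det(BA - tI)$ for square $A, B$), one obtains $P(\lambda) = (-1)^n P(-\lambda)$. Hence $\lambda \mapsto -\lambda$ holds, and combined with the reality pairing $\lambda \mapsto \overline{\lambda}$ this is the full Hamiltonian symmetry.

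For the converse I would isolate a single spectral invariant that the A-factorization forces to vanish but that survives for asymmetric deformations, namely the trace. Using $\mathcal{A}^\ell_j = c^\ell_{jk} h^k$, we have $\mathrm{tr}\,\mathcal{A} = c^j_{jk} h^k = \langle \bm{a}, \bm{h}\rangle$, where $a_k = c^j_{jk} = \mathrm{tr}\,\ad_{\bm{e}_k}$ is the unimodularity (trace) vector of the Lie algebra. A short index computation on $[~,~]_M = M[~,~]_{\mathrm{AS}}$ shows that full antisymmetry of $[~,~]_{\mathrm{AS}}$ annihilates the symmetric part of $M$ inside $\bm{a}$, so $\bm{a} = 0$ precisely when $M$ is symmetric (class A), whereas the antisymmetric part of $M$ generically produces $\bm{a} \neq 0$ (class B). Since Hamiltonian symmetry forces $\mathrm{tr}\,\mathcal{A} = \sum \lambda_i = 0$, any $\bm{h}$ with $\langle \bm{a}, \bm{h}\rangle \neq 0$ yields an unbalanced (chiral) spectrum; in three dimensions this reproduces the rattleback trace $\mathrm{tr}\,\mathcal{A} = (1+\eta) h^3$ of type-$\mathrm{VI}_{\eta\neq -1}$.

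The main obstacle is the class-C case together with the word ``generically.'' Class C is defined only by exclusion --- it is neither a symmetric nor an asymmetric deformation of a fully antisymmetric bracket --- so the clean factorization $\mathcal{A} = \mathscr{J}_{\bm{h}} M$ is simply unavailable and the trace vector $\bm{a}$ is not handed to us by a deformation matrix. The honest claim is therefore that class B and class C lack the skew-times-symmetric structure that protects Hamiltonian symmetry, and I would establish chirality by exhibiting a nonvanishing odd spectral invariant: the trace $\langle \bm{a}, \bm{h}\rangle$ whenever the algebra is non-unimodular, and, should some class-C algebra happen to be unimodular so that the trace degenerates, the next odd coefficient of $P(\lambda)$ (the sum of odd-order principal minors), which must all vanish for $\lambda \mapsto -\lambda$ symmetry. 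Proving that at least one such odd invariant is nonzero for generic $\bm{h}$ --- equivalently, that no symmetric $\mathcal{H}$ can satisfy $\mathcal{A} = J\mathcal{H}$ with a compatible skew $J$ --- is the genuinely delicate step, and I expect it to require the Jacobi constraints on the admissible structure constants rather than a one-line determinant manipulation.
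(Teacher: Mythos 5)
Your class-A argument is essentially the paper's own: the paper disposes of this direction by pointing to Remark~\ref{remark:Hamiltonian-generator} --- full antisymmetry of the mother bracket makes $\mathscr{J}_{\bm{h}}=-J_{\mathrm{AS}}(\bm{h})$ skew-symmetric, symmetry of $M$ makes $\frac{1}{2}\langle M\tilde{\bm{\xi}},\tilde{\bm{\xi}}\rangle$ a Hamiltonian, so $\mathcal{A}=\mathscr{J}_{\bm{h}}M$ is a Hamiltonian generator --- and then invokes Remark~\ref{remark:Hamiltonian_symmetry}. Your additional identity $P(\lambda)=(-1)^nP(-\lambda)$, obtained from transposition plus $\det(AB-tI)=\det(BA-tI)$, is not a detour but a genuine repair: the proof in Remark~\ref{remark:Hamiltonian_symmetry} uses an invertible canonical $J_c$, whereas $\mathscr{J}_{\bm{h}}$ is always singular (indeed $\mathscr{J}_{\bm{h}}\bm{h}=0$ by full antisymmetry), so the paper's argument does not apply verbatim and your determinant identity closes that hole without any Darboux splitting. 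Where you genuinely diverge is the converse. The paper does not actually prove it: it declares it evident from Remark~\ref{remark:Hamiltonian-generator} (absence of the protecting skew-times-symmetric factorization), immediately concedes that special choices of $\bm{h}$ produce symmetric spectra even in classes B and C, and lets Table~\ref{table:Bianchi-linear-B} and the four-dimensional examples carry the claim. Your trace argument is sharper than anything in the paper: $\mathrm{tr}\,\mathcal{A}=c^j_{jk}h^k=\langle\bm{a},\bm{h}\rangle$ with $\bm{a}$ the unimodularity vector, the fully antisymmetric structure tensor annihilates the symmetric part of $M$ in this contraction, so class A is unimodular while every non-unimodular algebra (which is all of three-dimensional class B) has nonzero trace for generic $\bm{h}$, hence unbalanced spectra --- a computable obstruction the paper never isolates, and one that reproduces the rattleback trace $(1+\eta)h^3$ exactly. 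Two caveats temper this: for $n>3$ an asymmetric $M$ can still contract to $\bm{a}=0$ against the structure tensor, so non-unimodularity is sufficient for chirality but not equivalent to class B; and class C, being defined by exclusion, hands you no $M$, so your program of higher odd coefficients of $P(\lambda)$ remains, as you admit, unfinished. That residual gap is real, but it is precisely the paper's gap as well --- the second sentence of the theorem is in truth a genericity claim established by example, and your proposal proves strictly more of it than the paper does.
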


The proof is evident from Remark\,\ref{remark:Hamiltonian-generator}.
Note that this theorem does not preclude the possibility of Hamiltonian symmetry of spectra in class-B or class-C systems;
special selection of $\bm{h}$ can produce symmetric spectra (see Table\,\ref{table:Bianchi-linear-B}).

By Remark\,\ref{remark:semi-simple}, we find 
\begin{corollary}[semi-simple Lie-Poisson system]
\label{col:semi-simple}
When a Lie-Poisson bracket $\{G,H\}=\langle [\partial_{\bm{\xi}}G, \partial_{\bm{\xi}} H]_{\mathfrak{g}}, \bm{\xi} \rangle$
is defined by the Lie bracket $[~,~]_{\mathfrak{g}}$ of a semi-simple Lie algebra $\mathfrak{g}$, 
it is Class-A, so that the linearized generator $\mathcal{A} = -[ \bm{h},  \circ]_{\mathfrak{g}}^* = -[ \bm{h},  \circ]_{\mathfrak{g}}$
($\bm{h}\in \mathfrak{g}$) has Hamiltonian symmetric spectra.
The Casimir $\frac{1}{2}|\bm{\xi}|^2$ is the Hamiltonian of the linearized system.
\end{corollary}

We also note that, unlike the case of 3-dimensional Lie algebras,
the deformation matrix $M$ is not so easily found as in Theorem\,\ref{theorem:3D}.
Even a symmetric $M$ may deteriorate Jacobi's identity.
Or, a small-rank $M$ such as $N\oplus^{n-2} 0$ no longer yields an abelian derived algebra, 
so the multiplication table of the deformed algebra must be carefully constructed to satisfy Jacobi's identity.

To see how the extended classification applies,
let us examine the 4-dimensional Lie algebras;
we invoke the complete list given in \cite{Patera}.

As is well known, there are no simple (or semi-simple) 4-dimensional Lie algebras 
(we exclude algebras that are direct sums of lower dimensional algebras).
For the ``mother'' algebra, we choose a fully antisymmetric algebra $\mathbb{R}\oplus\mathfrak{so}(3)$
(which is not in the list of \cite{Patera} because it has  the three-dimensional sub-algebra $\mathfrak{so}(3)$); 
the multiplication table of this algebra is
\[
\begin{array}{c|cccc}
~             & [\circ, \bm{e}_1] & [\circ, \bm{e}_2] & [\circ, \bm{e}_3] & [\circ, \bm{e}_4]  \\
\hline
\bm{e}_1 & 0                           &  0                         &  0                           & 0                           \\
\bm{e}_2 &  -                           &  0                         &  \bm{e}_4              & -\bm{e}_3              \\
\bm{e}_3 &  -                           &  -                         &  0                            &  \bm{e}_2              
\end{array}
\]
The linearized generator $[\bm{h}, \circ]^*= -[\circ, \bm{h}]^*$ (here we have used the full asymmetry)
has Hamiltonian spectra determined by the characteristic equation $\lambda^2 (\lambda^2 + |\bm{h}|^2)=0$.
All possible deformation matrices and the resultant multiplication tables are listed below:
\begin{enumerate}
\item
Symmetric deformation yielding $A_{4,10}$ (class A) algebra: 
\[
M^{\mathrm{T}} = \left( \begin{array}{cccc}
~0~ & ~0~ & ~0~ & ~1~ \\
0 & 1 & 0 & 0 \\
0 & 0 & 1 & 0 \\
1 & 0 & 0 & 0
\end{array} \right) , 
\quad\quad
\begin{array}{c|cccc}
~             & [\circ, \bm{e}_1] & [\circ, \bm{e}_2] & [\circ, \bm{e}_3] & [\circ, \bm{e}_4]  \\
\hline
\bm{e}_1 & 0                           &  0                         &  0                           & 0                           \\
\bm{e}_2 &  -                           &  0                         &  \bm{e}_1              & -\bm{e}_3              \\
\bm{e}_3 &  -                           &  -                         &  0                            &  \bm{e}_2              
\end{array}
\]
The linearized generator 
has Hamiltonian spectra determined by the characteristic equation $\lambda^2 (\lambda^2 + (h^4)^2)=0$.

\item
Symmetric deformation yielding $A_{4,8}$ (class A) algebra: 
\[
M^{\mathrm{T}} = \left( \begin{array}{cccc}
~0~ & ~0~ & ~0~ & ~1~ \\
0 & 0 & -1 & 0 \\
0 & -1& 0 & 0 \\
1 & 0 & 0 & 0
\end{array} \right) , 
\quad\quad
\begin{array}{c|cccc}
~             & [\circ, \bm{e}_1] & [\circ, \bm{e}_2] & [\circ, \bm{e}_3] & [\circ, \bm{e}_4]  \\
\hline
\bm{e}_1 & 0                           &  0                         &  0                           & 0                           \\
\bm{e}_2 &  -                           &  0                         &  \bm{e}_1              & \bm{e}_2               \\
\bm{e}_3 &  -                           &  -                         &  0                            & -\bm{e}_3              
\end{array}
\]
The linearized generator 
has Hamiltonian spectra determined by the characteristic equation $\lambda^2 (\lambda-h^4)(\lambda+h^4)=0$.

\item
Symmetric deformation yielding $A_{4,1}$ (class A) algebra: 
\[
M^{\mathrm{T}} = \left( \begin{array}{cccc}
~0~ & ~0~ & ~0~ & ~1~ \\
0 & -1 & 0 & 0 \\
0 & 0 & 0 & 0 \\
1 & 0 & 0 & 0
\end{array} \right) ,
\quad\quad
\begin{array}{c|cccc}
~             & [\circ, \bm{e}_1] & [\circ, \bm{e}_2] & [\circ, \bm{e}_3] & [\circ, \bm{e}_4]  \\
\hline
\bm{e}_1 & 0                           &  0                         &  0                           & 0                           \\
\bm{e}_2 &  -                           &  0                         &  \bm{e}_1              & 0                           \\
\bm{e}_3 &  -                           &  -                         &  0                            &  -\bm{e}_2              
\end{array}
\]
The linearized generator has only zero eigenvalue.

\item
Asymmetric deformation yielding $A_{4,3}$ (class B) algebra: 
\[
M^{\mathrm{T}} = \left( \begin{array}{cccc}
~0~ & -1& ~0~ & ~0~ \\
0 & 0 & 0 & 1 \\
0 & 0 & 0 & 0 \\
0 & 0 & 0 & 0
\end{array} \right) , 
\quad\quad
\begin{array}{c|cccc}
~             & [\circ, \bm{e}_1] & [\circ, \bm{e}_2] & [\circ, \bm{e}_3] & [\circ, \bm{e}_4]  \\
\hline
\bm{e}_1 & 0                           &  0                         &  0                           & 0                           \\
\bm{e}_2 &  -                           &  0                         &  \bm{e}_2              & 0                           \\
\bm{e}_3 &  -                           &  -                         &  0                            &  -\bm{e}_1              
\end{array}
\]
The linearized generator 
has chiral spectra determined by the characteristic equation $\lambda^3 (\lambda - h^3)=0$.
\end{enumerate}

As shown in Table I of \cite{Patera}, there are twelve 4-dimensional real Lie algebras
(excluding those which are direct sums of lower-dimensional algebras).
Separating out the aforementioned four algebras, the remaining eight are class C,
i.e.\  they are not obtained by any deformation of a fully antisymmetric 4-dimensional Lie algebra.
As easily inferred, the linearized generator is not Hamiltonian.
For example, $A_{4,12}$ algebra: 
\[
\begin{array}{c|cccc}
~             & [\circ, \bm{e}_1] & [\circ, \bm{e}_2] & [\circ, \bm{e}_3] & [\circ, \bm{e}_4]  \\
\hline
\bm{e}_1 & 0                           &  0                         &  \bm{e}_1              & -\bm{e}_2              \\
\bm{e}_2 &  -                           &  0                         &  \bm{e}_2              & \bm{e}_1               \\
\bm{e}_3 &  -                           &  -                         &  0                            &  0              
\end{array}
\]
is class C.  The characteristic equation of the linearized generator
is $\lambda^2 [(\lambda - h^3)^2 + (h^4 \lambda)^2]=0$, which gives a chiral spectrum.


\section{Vector bundle of $\mathfrak{so}(3)$ fibers and its deformations}
\label{sec:so(3)_bundle}

Here we give an example of an  infinite-dimensional Poisson manifold that is relevant to vortex dynamics in fluids.

\subsection{Vector bundle}
\label{subsubsec:so(3)-bundle}

We introduce a base space $\Omega \subset \mathbb{R}^3$, which is
a bounded  domain with a smooth boundary $\partial\Omega$.
We consider the vector bundle $E$ that consists of fibers of the $\mathfrak{so}(3)$ algebra;
each fiber has the Lie bracket 
\[
[\bm{a}, \bm{b} ]_{\RN{9}} = \bm{a}\times \bm{b} 
\quad (\bm{a}, \bm{b} \in \mathbb{R}^3).
\]
We assume that each $\bm{v} \in E$ is a $C^\infty$-class 3-vector function of $\bm{x}\in \Omega$, 
and write it as $\bm{v}(\bm{x})$.
Then, $E$ is regarded as a function space (totality of $C^\infty$-class cross-sections) endowed with a Lie bracket
\[
\dbracket{\bm{v}(\bm{x})}{\bm{w}(\bm{x})}_{\RN{9}} =   \bm{v}(\bm{x})\times \bm{w}(\bm{x}) ,
\quad (\bm{x}\in\Omega).
\]

The $L^2$-completion of $E$ is denoted by $V$.
Taking the $L^2$ inner product as the paring $\langle~,~\rangle$, 
the phase space is $V^* = V$.
Evidently, $\dbracket{~}{~}_{\RN{9}}^* =  \dbracket{~}{~}_{\RN{9}}$.
For a functional $F\in C^\infty(V^*)$, we define the gradient $\partial_{\bm{u}} F \in V$ by
\[
\delta F = F(\bm{u}+\epsilon\tilde{\bm{u}}) - F(\bm{u}) =
\epsilon \langle \partial_{\bm{u}} F , \tilde{\bm{u}} \rangle + O(\epsilon^2)
\quad (\forall \tilde{\bm{u}} \in V^*).
\]

The ``mother'' Lie-Poisson bracket (which will be deformed in various ways) is
\begin{equation}
\dpoisson{F}{G}_{\RN{9}} := \langle \dbracket{\partial_{\bm{u}} F}{\partial_{\bm{u}}G}_{\RN{9}}, \bm{u} \rangle
= \langle \partial_{\bm{u}} F, \dbracket{\partial_{\bm{u}}G}{\bm{u}}_{\RN{9}}^* \rangle,
\label{SO(3)_of_fields}
\end{equation}
and the corresponding Poisson operator is
\begin{equation}
\mathcal{J}_{\RN{9}} (\bm{u}) = \dbracket{\,\circ \, }{\bm{u}}_{\RN{9}}^*
= (\,\circ\, \times \bm{u} ).
\label{Poisson_oprator_IX_for_fields}
\end{equation}
We may evaluate the brackets on the dense subset $E\subset V^*=V$.


\subsection{Local deformations}
\label{subsec:chiral}

By applying the deformation using  a $3\times3$ constant-coefficient matrix $M$ of a type specified in Theorem\,\ref{theorem:3D}, we obtain a bundle of 3-dimensional Lie algebras.
Each of them is just the ``direct sum'' of the local Lie algebras; hence Jacobi's identity evidently holds.

An asymmetric $M$ yields a bundle of class-B algebra, and the
corresponding Lie-Poisson system exhibits chirality.
Let us demonstrate this with type III.  
Using 
\[
M = \left( \begin{array}{ccc}
~0~ & ~0~ & ~0~ \\
-1 & 0 & 0 \\
0 & 0 & 0
\end{array} \right),
\]
(see Tabel\,\ref{table:Bianch-3D_by_deformation}), we define a bracket 
\begin{equation}
\dbracket{ \bm{v}(\bm{x})}{\bm{w}(\bm{x})}_{\RN{3}}
:= M^{\mathrm{T}} \dbracket{\bm{v}(\bm{x})}{\bm{w}(\bm{x})}_{\RN{9}} 
= M^{\mathrm{T}} (\bm{v}(\bm{x})\times  \bm{w}(\bm{x})) .
\label{III-fieldized}
\end{equation}
Evidently, this defines a Lie algebra on $E$.
The Lie-Poisson bracket (\ref{SO(3)_of_fields}) is deformed to
\begin{equation}
\dpoisson{F}{G}_{\RN{3}} = \langle \dbracket{\partial_{\bm{u}} F}{\partial_{\bm{u}} G}_{\RN{3}}, \bm{u} \rangle
= \langle \dbracket{\partial_{\bm{u}} F}{\partial_{\bm{u}} G}_{\RN{9}} , M \bm{u} \rangle,
\label{III_of_fields}
\end{equation}
which gives a Poisson operator
\begin{equation}
\mathcal{J}_{\RN{3}}(\bm{u})
= \,\circ\, \times (M\bm{u}) ,
\label{Poisson-operator_for_III_of_fields}
\end{equation}
where $M\bm{u}= (0~ -u_1~ 0)^{\mathrm{T}}$.
This deformed system exhibits chirality.  
The linearized equation around the singularity $u_1=0$ is
(denoting $\bm{h}= \partial_{\bm{u}} H |_{u_1=0}$)
\[
\frac{\partial}{\partial t} \left( \begin{array}{c}
\tilde{u_1} \\
\tilde{u_2} \\
\tilde{u_3} \end{array} \right)
= 
\left( \begin{array}{c}
-h^3 \tilde{u_1} \\
0 \\
h^1 \tilde{u_1} \end{array} \right),
\]
which generates a chiral solution $\tilde{u}_1 \propto \rme^{-h^3 t} $.
Here $h^3$ is a function of space $\bm{x}$, so $-h^3$ is a continuous spectrum.


\subsection{Deformation by the ``curl'' operator: vortex dynamics system}
\label{subsec:vortex}

Here we deform $\bm{u}\in E$ by a differential operator curl (to be denoted by $\nabla\times$),
and choose  \emph{vorticity} $\bm{\omega} = \nabla\times\bm{u}$ as our observable.
Then, we obtain the Lie-Poisson bracket of vortex dynamics.
We start by preparing the mathematical definition of the curl operator.

\subsubsection{Self-adjoint curl operator}
We consider a subspace $E_\Sigma \subset E$
consisting of smooth 3-vectors that are solenoidal ($\nabla\cdot\bm{v}=0$),
tangential to the boundary 
($\bm{n}\cdot\bm{v}=0$, where $\bm{n}\cdot$ is the trace of the normal component onto $\partial\Omega$),
and 0-flux ($\int_S \bm{\nu}\cdot\bm{v}\,\rmd^2 x=0$, 
where $S$ is an arbitrary cross-section of the handle, if any, of $\Omega$, and $\bm{\nu}\cdot$ is the 
trace of the normal component onto $S$).
Let $L^2_\Sigma (\Omega)$ be the Hilbert space given by the $L^2$-completion of $E_\Sigma$:
 \[
L^2_\Sigma (\Omega) = \{ \bm{v} \in L^2(\Omega); \, \nabla\cdot\bm{v}=0,\,\bm{n}\cdot\bm{v}=0,\,
\int_S \bm{\nu}\cdot\bm{v}\,\rmd^2 x=0 \}.
 \]
The orthogonal complement of $L^2_\Sigma(\Omega)$ is $\mathrm{Ker}\,(\nabla\times)$,
which we will denote by $L^2_\Pi(\Omega)$, i.e.
\[
L^2(\Omega) = L^2_\Sigma(\Omega) \oplus L^2_\Pi(\Omega).
\]
We denote by $\mathscr{P}_\Sigma$ the orthogonal projection onto $L^2_\Sigma (\Omega)$
(when operated, this projector subtracts the \emph{irrotational component}),
and $\mathscr{P}_\Pi = I-\mathscr{P}_\Sigma$.

To formulate a system of vortex dynamics,
we invoke the self-adjoint curl operator given by \cite{YG}.
Let 
\[
H^1_{\Sigma\Sigma}(\Omega) 
= \{ \bm{u}\in L^2_\Sigma(\Omega);\, \nabla\times\bm{u}\in L^2_\Sigma(\Omega) \} ,
\] 
which is a dense, relatively compact subset of $\in L^2_\Sigma(\Omega)$.
We define a self-adjoint operator in $L_\Sigma(\Omega)$ such that $\mathcal{S} \bm{u} = \nabla\times\bm{u}$
on the domain $H^1_{\Sigma\Sigma}(\Omega) $.
This is a surjection to $ L^2_\Sigma(\Omega)$ with a compact inverse $\mathcal{S}^{-1}$,
so the set of eigenfunctions of $\mathcal{S}$ gives an orthogonal complete basis of $L_\Sigma(\Omega)$.

Combining with $\mathscr{P}_\Sigma$, we consider $\mathcal{S} $ in $L^2(\Omega)$:
\begin{equation}
{\mathscr{S}} = \mathcal{S} \mathscr{P}_\Sigma.
\label{sa-curl_extended}
\end{equation}
We may write ${\mathscr{S}} = \mathcal{S} \oplus 0 \mathscr{P}_\Pi$.
Evidently, ${\mathscr{S}} $ is a self-adjoint operator in $V^*=L^2(\Omega)$
(notice that this ${\mathscr{S}} $ is different from the non-self-adjoint curl operator $T$ or $\tilde{T}$ discussed in\,\cite{YG}).

\subsubsection{Deformation by the self-adjoint curl operator}

Let us deform $\dpoisson{G}{H}_{\RN{9}} = \langle \dbracket{\partial_{\bm{u}} G}{\partial_{\bm{u}} H}_{\RN{9}}, \bm{u} \rangle$ to  $\langle \dbracket{\partial_{\bm{u}} G}{\partial_{\bm{u}}H }_{\RN{9}}, \mathscr{S} \bm{u} \rangle$
(which means that we deform the Poisson operator
$\mathcal{J}_{\RN{9}}(\bm{u})$ to
$\mathcal{J}_{\RN{9}}(\mathscr{S}\bm{u})$).
Consequently, the Lie bracket $\dbracket{\bm{u}}{\bm{v}}_{\RN{9}}$ of $E$ is deformed to 
\begin{equation}
\dbracket{\bm{\omega}}{\bm{\phi}}_{\mathscr{S}}
= \mathscr{S} \dbracket{\bm{\omega}}{\bm{\phi}}_{\RN{9}} 
\quad (\bm{\omega}, \bm{\phi} \in E_\Sigma).
\label{curl-deformed_so(3)-bundle}
\end{equation}
Notice that we define the Lie algebra on a reduced space $E_\Sigma=\mathscr{P}_\Sigma E$
(see Remark\,\ref{remark:reduction}).
On $E_\Sigma$, we may evaluate (using $\nabla\cdot \bm{\omega} = \nabla\cdot \bm{\phi}=0$)
\[
\mathscr{S} \dbracket{\bm{\omega}}{\bm{\phi}}_{\RN{9}} = \nabla\times (\bm{\omega} \times \bm{\phi})
= (\bm{\phi}\cdot\nabla) \bm{\omega} - (\bm{\omega}\cdot\nabla)\bm{\phi}.
\]
The right-hand side is nothing but the Lie derivative of the vector:
$\mathcal{L}_{\bm{\phi}} \bm{\omega} $.
Hence, Jacobi's identity is evident (being equivalent to the Leibniz law for Lie derivatives) 
(cf.\,\cite{Chandre2013}).


The reduction to $E_\Sigma$ is naturally implemented in the definition of the deformed Lie-Poisson bracket,
because $E_\Sigma$ can be regarded as the phase space of vorticities;
by the definition of the self-adjoint curl operator $\mathcal{S}$, we find
\[
E_\Sigma =\{ \bm{\omega} = \mathscr{S} \bm{u} ; \, \bm{u}\in E \}  .
\]
By the chain rule, we observe, for a functional $F(\bm{\omega}) \in C^\infty(E_\Sigma)$,
\[
\delta F = \langle\partial_{\bm{u}} F , \tilde{\bm{u}} \rangle
= \langle\partial_{\bm{\omega}} F , \tilde{\bm{\omega}} \rangle
= \langle\partial_{\bm{\omega}} F ,  \mathscr{S} \tilde{\bm{u}} \rangle
= \langle\mathscr{S} \partial_{\bm{\omega}} F ,  \tilde{\bm{u}} \rangle .
\]
Therefore, we may evaluate, for a functional $F(\bm{\omega}) \in C^\infty(E_\Sigma)$,
\[
\partial_{\bm{u}} F = \mathscr{S} \partial_{\bm{\omega}} F.
\]
We define the curl-deformed Lie-Poisson bracket on $ C^\infty(E_\Sigma)$:
\begin{eqnarray}
\dpoisson{G}{H}_{\mathscr{S}} &:=& 
\langle \dbracket{\partial_{\bm{u}} G}{\partial_{\bm{u}} H}_{\RN{9}}, \mathscr{S}\bm{u} \rangle
\nonumber \\
&=&
\langle \dbracket{\mathscr{S}\partial_{\bm{\omega}} G}{\mathscr{S}\partial_{\bm{\omega}} H}_{\RN{9}},  \bm{\omega} \rangle
\nonumber \\
&=&
\langle \mathscr{S} \partial_{\bm{\omega}} G, \dbracket{\mathscr{S}\partial_{\bm{\omega}} H}{\bm{\omega}}^*_{\RN{9}}\rangle
\nonumber \\
&=&
\langle \partial_{\bm{\omega}} G, 
\mathscr{S} \dbracket{\mathscr{S} \partial_{\bm{\omega}} H}{\bm{\omega}}^*_{\RN{9}} \rangle .
\label{SO(3)_of_fields-curled-curled}
\end{eqnarray}
The corresponding Poisson operator reads
\begin{equation}
\mathcal{J}_{\mathscr{S}}(\bm{\omega})
= \mathscr{S}((\mathscr{S}\,\circ\, )\times\bm{\omega} )
= \mathscr{P}_\Sigma \nabla\times( (\nabla\times\circ\,)\times\bm{\omega}),
\label{Poisson-operator_for_SO(3)_of_fields_curled-curled}
\end{equation}
which applies to the vortex dynamics equation for formulating a Hamiltonian form (see Remark\,\ref{remark:fluid_bracket}).
The Casimir is 
\begin{equation}
C(\bm{\omega}) = \frac{1}{2} \langle  {\mathcal{S}}^{-1} \bm{\omega}, \bm{\omega} \rangle,
\label{Casimir_for_SO(3)_of_fields_curled-curled}
\end{equation}
which is known as the \emph{helicity}, an important invariant of ideal (barotropic and dissipation-free) fluid motion. 

\begin{remark}[vortex dynamics]
\label{remark:fluid_bracket}
\normalfont
The ``Hamiltonian'' of incompressible fluid (mass density = 1) is given by
\begin{equation}
H(\bm{\omega}) = \frac{1}{2} \int_\Omega |\mathcal{S}^{-1} \bm{\omega}|^2\,\rmd^3x
 = \frac{1}{2} \int_\Omega |\bm{u}|^2\,\rmd^3 x.
\label{fluid_energy}
\end{equation}
Here, $\bm{u} =\mathcal{S}^{-1} \bm{\omega} \in L^2_\Sigma(\Omega)$ is the \emph{dynamical component}
of the fluid velocity (the irrotational component $\in L^2_\Pi(\Omega)$ is fixed by the boundary condition and the circulation law).
Hamilton's equation $\ddt {F} = \{ F, H \}_{\mathscr{S}}$
yields the vortex dynamics equation
\begin{equation}
\partial_t {\bm{\omega}} = -\nabla\times(\bm{\omega}\times\bm{u} ).
\label{3D_vortex_equation-2}
\end{equation}
The helicity
 \begin{equation}
C(\bm{\omega}) = \frac{1}{2} \int_\Omega \bm{\omega}\cdot \mathcal{S}^{-1} \bm{\omega}\,\rmd^3x
 = \frac{1}{2} \int_\Omega \mathscr{S}\bm{u}\cdot \bm{u}\,\rmd^3x 
\label{fluid_helicity}
\end{equation}
is a Casimir of the bracket $\dpoisson{G}{H}_{\mathscr{S}}$.
\end{remark}

Evidently, this curl-deformed system is class A.
The linearized equation reads, denoting $\bm{h} = (\mathscr{S} \partial_{\bm{\omega}} H)_{\bm{\omega}=0}$,
\begin{equation}
\partial_t {\tilde{\bm{\omega}}} = \mathscr{S}(\bm{h}\times \tilde{\bm{\omega}}) .
\label{3D_vortex_equation-linear}
\end{equation}
By the symmetry $\mathscr{S}(\bm{h}\times \tilde{\bm{\omega}}) = - \mathscr{S}(\tilde{\bm{\omega}}\times\bm{h})$,
we may rewrite the right-hand side of (\ref{3D_vortex_equation-linear}) as,
using the Poisson operator of (\ref{Poisson-operator_for_SO(3)_of_fields_curled-curled}) and
the Casimir of (\ref{Casimir_for_SO(3)_of_fields_curled-curled}),
\begin{equation}
\partial_t {\tilde{\bm{\omega}}} = -\mathcal{J}_{\mathscr{S}}(\bm{h}) \mathcal{S}^{-1} \tilde{\bm{\omega}}
=  -\mathcal{J}_{\mathscr{S}}(\bm{h}) \partial_{ \tilde{\bm{\omega}}} C(\tilde{\bm{\omega}}),
\label{3D_vortex_equation-linear-2}
\end{equation}
which is a linear Hamiltonian system with the Casimir as the Hamiltonian
(the linear operator has a continuous spectrum due to flow shear; cf.\,\cite{BalmforthMorrison,Morrison2003}).


\begin{remark}[reduction]
\label{remark:reduction}
\normalfont
In the definition (\ref{curl-deformed_so(3)-bundle}) of the curl-deformed Lie bracket
$\dbracket{~}{~}_{\mathscr{S}}$,
we \emph{reduced} the state space from $E$ to $E_\Sigma=\mathscr{P}_\Sigma E=\mathscr{S}E$.
In some sense, this means that we are considering a derived algebra,
or, the ideal consisting of members such that $\mathscr{S}\dbracket{\bm{v}}{\bm{w}}_{\RN{9}}$
($\forall \bm{u}, \bm{v} \in E$,
while $\mathscr{S}\dbracket{\bm{v}}{\bm{w}}_{\RN{9}}$ is not a Lie bracket on $E$).
If we apply a similar reduction to the 3-dimensional Lie algebras  discussed in Sec.\,\ref{subsec:3D_deformation},
i.e.\ if we evaluate the deformed bracket $[~,~]_M = M^{\mathrm{T}}[~,~]_{\RN{9}}$ on $M^{\mathrm{T}} X$,
it becomes abelian (as mentioned when we derived rank-2 systems).
However, the present example of reduction yields the non-abelian algebra.
\end{remark}

\subsection{Variety of vortex systems}
\label{subsec:combined-symmetric}
We may modify the standard curl operator to a variety of differential operators.
by which we can formulate generalized vortex systems. 
We consider a symmetric deformation by a combined self-adjoint operator
\begin{equation}
\mathscr{M} = M^{\mathrm{T}} \mathscr{S} + \mathscr{S} M ,
\label{combination}
\end{equation}
where $M$ is some deformation matrix (either symmetric or asymmetric; see Table.\,\ref{table:Bianch-3D_by_deformation}).

We define a deformed vorticity
\begin{equation}
\bm{\omega}_{M} = \mathscr{M} \bm{u} ,
\quad (\bm{u} \in E).
\label{deformed_vorticity}
\end{equation}
The totality of the deformed vorticity constitute a phase space:
\begin{equation}
E_{\mathscr{M}} = \{ \bm{\omega}_M=\mathscr{M} \bm{u};\, \bm{u}\in E \}.
\label{space_of_deformed_vorticity}
\end{equation}

On $C^\infty(E_{\mathscr{M}})$, we define a deformed Lie-Poisson bracket
(denoting $\bm{\omega}_M=\mathscr{M}\bm{u}$)
\begin{eqnarray*}
\dpoisson{G}{H}_{\mathscr{M}} &:=& 
\langle \dbracket{\partial_{\bm{u}} G}{ \partial_{\bm{u}} H}_{\RN{9}} ,\mathscr{M} \bm{u}\rangle
\\
&=& \langle \dbracket{\mathscr{M}\partial_{\bm{\omega}_{M}}G}{ \mathscr{M}\partial_{\bm{\omega}_{M}} H}_{\RN{9}} , \bm{\omega}_{M} \rangle
\\ 
&=&\langle \partial_{\bm{\omega}_{M}} G, \mathscr{M}\dbracket{\mathscr{M}\partial_{\bm{\omega}_M} H}{\bm{\omega}_{M} }^*_{\RN{9}} \rangle .
\end{eqnarray*}
The corresponding Poisson operator is
\begin{equation}
\mathcal{J}_{ \mathscr{M}}(\bm{\omega}_M)
= \mathscr{M}((\mathscr{M}\,\circ\, )\times\bm{\omega}_M ),
\label{Poisson-operator_for_SO(3)_of_fields_curled-curled-M}
\end{equation}
which has a Casimir 
\[
C(\bm{u}) = \frac{1}{2} \langle  \mathscr{M}_0^{-1} \bm{\omega}_M, \bm{\omega}_M \rangle,
\]
where $\mathscr{M}_0=\mathscr{M}/\mathrm{Ker}(\mathscr{M})$, so that 
$\mathscr{M} \mathscr{M}_0^{-1} \bm{\omega}_M = \bm{\omega}_M$.
This symmetric Casimir plays the role of the Hamiltonian in the linearized system, resulting in the Hamiltonian symmetry of the spectra.

\section{Conclusion}
The Lie-Poisson algebra is a special class of Poisson algebras, which is naturally introduced to the phase space $X^*$ that is dual to some Lie algebra $X$.
The coadjoint action, generated by a Hamiltonian (a smooth function on $X^*$), describes the evolution of an observable (point $\bm{\xi} \in X^*$).
The problem we have explored is how the Lie algebra $X$ is \emph{deformed} when we transform the observable $\bm{\xi} \mapsto M\bm{\xi}$ by $M \in \mathrm{End}(X^*)$.
Guided by Bianchi's list of Lie algebras, we found that 
the symmetry/asymmetry of the deformation matrix $M$ gives an interesting classification A/B,
which corresponds to the Hamiltonian symmetry/asymmetry of spectra in the neighborhood of the singularity (nullity) of the coadjoint action;
the symmetry breaking, occurring in class-B systems, appears as chirality (breaking of the time-reversal symmetry) in the neighborhood of the singularity, which is forbidden in usual Hamiltonian spectra evaluated around regular equilibrium points (critical points of a given Hamiltonian).
Since the linearization works out differently in the neighborhood of singularities 
(which commonly exists in Lie-Poisson manifolds, but the dimension of the set of singularities depends on the Lie algebra), there is no general reason for the spectra to have the Hamiltonian symmetry.
Therefore, it is more interesting that class-A systems maintain the Hamiltonian symmetry
(Remark\,\ref{remark:Hamiltonian-generator} explains how this occurs).

The deformation induced by $M\in\mathrm{End}(X^*)$ is different from a ``coordinate change'' in a Lie algebra
(or the isomorphic deformations; see\, \cite{deformation});
in the latter, $T\in\mathrm{Aut}(X)$ applies as $[\bm{x}, \bm{y}] \mapsto [\bm{x}', \bm{y}']_T := T[T^{-1}\bm{x}', T^{-1} \bm{y}']$.
It is also different from a reduction (homomorphism to a sub-algebra; see \cite{Marsden}, as well as Remark\,\ref{remark:reduction})
or a constraint yielding a Dirac bracket (see \cite{Chandre2013}).
It is a back reaction to the Lie algebra caused by deforming the observable.  
Possible deformations $M$ are rather restricted (even for symmetric ones) by guaranteeing Jacobi's identity for the deformed bracket $M^{\mathrm{T}}[~,~]$. 
Interestingly, however, the 3-dimensional Lie algebras are totally derived from $\mathfrak{so}(3)$ by some deformations.
Given a base space, this mother algebra produces a variety of field models;
the deformation by the self-adjoint curl operator, for example, yields the Poisson manifold (infinite-dimensional) of vortex dynamics. However, it is challenging to obtain class-B infinite-dimensional systems and demonstrate chirality in such field theories. This will be discussed in future work.

\acknowledgments
The authors acknowledge stimulating discussions with Tadashi Tokieda and Richard Montgomery. 
In addition,  they would like to acknowledge insightful  comments of an anonymous referee and PJM would like to thank A. M. Bloch and Alan Weinstein for helpful correspondence. 
This material is based upon work supported by the National Science Foundation under Grant No.~1440140, while the authors were in residence at the Mathematical Sciences Research Institute in Berkeley, California, during the semester ``Hamiltonian systems, from topology to applications through analysis'' year of 2018.
The work of ZY was partly supported by JSPS KAKENHI grant number 17H01177,
and that of PJM was supported by the  DOE Office of Fusion Energy Sciences, under DE-FG02-04ER- 54742.



\end{document}